\documentclass{article}

\usepackage{PRIME1arxiv}

\usepackage[
backend=biber,
style=alphabetic,
sorting=ynt
]{biblatex}

\usepackage{biblatex}

\addbibresource{references.bib}

\usepackage[utf8]{inputenc} 
\usepackage[T1]{fontenc}    
\usepackage{hyperref}       
\usepackage{url}            
\usepackage{booktabs}       
\usepackage{amsfonts}       
\usepackage{nicefrac}       
\usepackage[utf8]{inputenc} 
\usepackage[T1]{fontenc}    
\usepackage{hyperref}       
\usepackage{url}            
\usepackage{booktabs}       
\usepackage{amsfonts}       
\usepackage{amsmath}
\usepackage{bbm}
\usepackage{amsthm}
\usepackage{mathtools}
\usepackage{nicefrac}       
\usepackage{microtype}      
\usepackage{fancyhdr}       
\usepackage{graphicx}       
\usepackage{wrapfig}
\graphicspath{{media/}}     
\usepackage{dsfont}
\usepackage{xcolor}
\usepackage{algorithm}
\usepackage{algpseudocode}
\usepackage[most]{tcolorbox}
\usepackage{mathtools}
\graphicspath{{media/}}     
\usepackage{tikz}
\usepackage{tabularx}
\usetikzlibrary{positioning, chains,fit, shapes,calc}

\newtheorem{theorem}{Theorem}
\makeatletter
\def\algbackskip{\hskip-\ALG@thistlm}
\makeatother
\algnewcommand\algorithmicforeach{\textbf{for each}}
\algdef{S}[FOR]{ForEach}[1]{\algorithmicforeach\ #1\ \algorithmicdo}


\newcommand{\supp}[1]{\textsc{supp}(#1)}
\newcommand{\test}[2]{\textsc{Test}(#1, #2)}
\newcommand{\R}{\mathbb{R}}
\newcommand{\Z}{\mathbb{Z}}
\newcommand{\F}{\mathbb{F}}
\newcommand{\wt}{\textsc{wt}}
\newcommand{\ti}[1]{\tilde{#1}}

\newcommand{\OR}{\textsc{OR }}
\newcommand{\f}[1]{\mathbf{#1}}
\newcommand{\corr}{\textsc{corr}}

\newcommand{\GreedyComp}{\textsc{Greedy Complete}}
\newcommand{\CheckCov}{\textsc{Check Coverage}}
\newcommand{\DisjDec}{\textsc{Disjunct Matrix Decoding}}

\newtheorem{corollary}[theorem]{Corollary}
\newtheorem{lemma}[theorem]{Lemma}

\newtheorem{definition}[theorem]{Definition}

\newcommand{\gv}[1]{{\color{red} #1}}

\pagestyle{fancy}
\thispagestyle{empty}
\rhead{ \textit{ }} 

\fancyhead[LO]{Combinatorial Group Testing in Presence of Deletions}

\title{Combinatorial Group Testing in Presence of Deletions
}

\author{
 Venkata Gandikota \\
 EECS, Syracuse University \\
 Syracuse, NY \\
 \texttt{vsgandik@syr.edu} \\
  \And
  Nikita Polyanskii\\
  IOTA Foundation\\
  Berlin, Germany\\
  \texttt{nikita.polianskii@iota.org}
  \And
 Haodong Yang \\
 EECS, Syracuse University \\
 Syracuse, NY\\
 \texttt{hyang85@syr.edu}
}

\begin{document}
\definecolor{myblue}{RGB}{80,80,160}
\definecolor{mygreen}{RGB}{80,160,80}
\maketitle
\begin{abstract}
 The study of group testing aims to develop strategies to identify a small set of defective items among a large population using a few pooled tests. The established techniques have been highly beneficial in a broad spectrum of applications ranging from channel communication to identifying COVID-19-infected individuals efficiently. Despite significant research on group testing and its variants since the 1940s, testing strategies robust to deletion noise have not been explored. Deletion errors, common in practical systems like wireless communication and data storage, cause asynchrony in tests, rendering current group testing methods ineffective.
In this work, we introduce non-adaptive group testing strategies resilient to deletion noise. We establish the necessary and sufficient conditions for successfully identifying defective items despite adversarial deletions of test outcomes. The study also presents constructions of testing matrices with a near-optimal number of tests and develops efficient and super-efficient recovery algorithms.
\end{abstract}

\keywords{Group Testing \and Edit Distance \and Deletion Distance \and Disjunct Matrix }

\section{Introduction}
The study of combinatorial group testing was initated by Dorfman \cite{dorfman1943detection} to devise screening strategies for the identification of syphilis-infected soldiers during World War II. The main observation that fueled this area of research is that pooling blood samples from different soldiers together can provide a significant reduction in the number of blood tests required to identify a few infected individuals. A pooled test containing blood samples from different individuals will be positive if and only if it contains at least one infected sample. Hence, compared to individual testing where a negative test just confirms a single healthy person, a single pooled test would mark all the sampled individuals within the test as healthy. Since then, group testing techniques have found applications in a wide range of domains such as industrial quality assurance \cite{sobel1959group}, DNA sequencing \cite{pevzner1994towards}, molecular biology \cite{macula1999probabilistic, schliep2003group}, wireless communication \cite{wolf1985born,berger1984random,luo2008neighbor}, data compression \cite{hong2002group,emad2014poisson}, pattern matching \cite{clifford2007k}, secure key distribution \cite{stinson2000secure}, network tomography \cite{ma2014node,cheraghchi2012graph}, efficient data analysis \cite{engels2021practical, cormode2005s} and more recently in COVID-19 testing protocols \cite{wang2023tropical,ghosh2021compressed}. 

Formally, the central question in group testing is to identify a small unknown set $S^*$ of $k $ defective items among a population of $n (\gg k)$ using a few grouped tests. The outcome of a pooled test corresponding to a subset $P \subseteq [n]$ of items is positive if it contains at least one defective element, i.e., $S^* \cap P \neq \emptyset$. Theoretical research in group testing has focused on the following central problem: 
\emph{How many pooled tests $\{P_1, \ldots, P_m\} \subseteq [n]$ are required to recover $S^*$ from the test outcomes of the form $\mathbbm{1}_{S^* \cap P \neq \emptyset}$?} The algorithmic question is then to efficiently decode to recover the identities of all the infected individuals, $S^*$ given all the test outcomes. 

Decades of research have helped us answer these questions in group testing and many of its variants such as \emph{probabilistic group testing} \cite{guruswami2023noise, cheraghchi2011group,aldridge2019group}, where the possible set of defectives have an associated prior, \emph{quantitative testing} \cite{hao1990optimal,gargano1992improved,d2013superimposed,inan2019optimality} where the test outcome is non-binary such as $|S^*\cap P|$, and \emph{constrained group testing} \cite{cheraghchi2012graph,gandikota2019nearly,inan2019sparse,agarwal2020group} where the allowable tests have some constraints on them. See \cite{d2014lectures,malyutov2013search,aldridge2019group,ngo2000survey,chen2008survey} for excellent surveys on established group testing techniques. 

Our focus in this work is on \emph{combinatorial group testing} (CGT), where the set of defectives can be any arbitrary $k$-subset of $[n]$. 
Furthermore, we will restrict our attention to non-adaptive tests, where the pooling strategy is fixed in advance. This is in contrast to adaptive testing, where each subsequent test is designed after seeing the previous test outcomes. Non-adaptive tests are more practical as they allow independent parallel testing thereby significantly speeding up the detection process.  It is, by now, folklore that $m=\Omega(k^2 \log n / \log k)$ non-adaptive tests are required for the exact recovery of the defectives in this model \cite{furedi1996onr, ruszinko1994upper}. Pooling strategies based on disjunct matrices enable efficient recovery using $m=O(k^2 \log n)$ non-adaptive tests in $O(mn)$ time \cite{du2000combinatorial,d1982bounds}. Several recent works have also devised super-efficient sublinear time algorithms that run in $\tilde{O}(m)$ time \cite{indyk2010efficiently,cheraghchi2020combinatorial,guruswami2023noise, cai2013grotesque}.

In practice, we observe that the test outcomes are often noisy. This noise can manifest in various forms, however, the most commonly encountered noise model in group testing is the bit-fip noise, where the binary test outcome can be flipped from a positive to a negative or vice-versa. 
The bit-flip noise model in group testing has been well-understood in both random \cite{guruswami2023noise, atia2012boolean, chan2011non,atia2009noisy,mazumdar2016nonadaptive,cai2013grotesque} and adversarial \cite{cheraghchi2009noise,ngo2011efficiently,li2021combinatorial} settings. In particular, the results of \cite{cheraghchi2009noise} establish bounds on the maximum fraction of tolerable false positives and false negatives in the test outcomes for exact and approximate recovery, and also devise pooling strategies robust to such errors. Despite decades of intense research in group testing, other noise models such as deletion noise are not well-understood. 

The problem is quite surprisingly unexplored as deletion errors are known to be prevalent in many applications such as wireless communications \cite{chen2020correcting}, and data storage \cite{chauhan2021portable,cheraghchi2020coded} where group testing techniques are widely used. 
Consider a variant of the A-channel \cite{Lancho2022Allerton} and the Binary-Adder Channel (BAC) \cite{fan1995superimposed}, where a set of \( K \) out of \( N \) users are active at any given time, and each active user transmits a codeword \( c_j \in C \) from a binary codebook \( C \subseteq \mathbb{F}_2^n \). The channel output \( Y \) can be modeled as \( Y = \bigvee_{j \in [K]} c_j \). For comparison, in an A-channel, the output is modeled as \( Y = \bigcup_j c_j \), whereas the BAC models the output as \( Y = \sum_j c_j \).  
The goal in such settings is to recover the set of transmitting users at a specific instant given the channel output \( Y \). However, factors like channel congestion and packet drops can introduce asynchronous outcomes at the receiver, resulting in errors that can be modeled as deletions.  
The decoding process for this channel can be framed as a group testing problem, where the codewords represent the identities of individual users.  
The constructions proposed in this paper can be effectively utilized to design codes for such channels, enabling efficient decoding, and improving our understanding of deletion resilient superimposed codes. 

In the deletion noise model, certain test outputs are deleted which shifts the outcomes, thereby causing asynchrony between the devised tests and their results. Note that deletion errors are in general more difficult to handle than bit-flip errors since even one deletion out of $m$ tests can be equivalent to $O(m)$ bit flips in the first $m-1$ tests. For instance, if the true output sequence of $m$ tests is given by $(1,0,1,0, \ldots, 1)$, where $1$ denotes a positive outcome, and $0$ denotes a negative output, then deleting the first test outcome will result in an $m-1$ length test output sequence $(0,1,0,1,\ldots,1)$ which has a large Hamming distance from the true outputs. The main difficulty in recovering $S^*$ from deletion-inflicted test outputs stems from this mismatch between the tests and their true outcomes. Through this work, we aim to enhance our understanding of noise-resilient group testing with adversarial deletions. 

We represent the set of $m$ tests as a binary testing matrix $A \in \{0,1\}^{m \times n}$, where each row indicates the set of elements pooled together in a particular test. We start by characterizing the necessary and sufficient conditions on the testing matrix for recovery in the presence of deletion noise. The necessary conditions are established by generalizing the notion of separable matrices -- the necessary conditions for exact recovery in noiseless group testing. Specifically, to be able to distinguish between any two distinct sets of defectives, we need their test outcomes to be different even after some $\Delta$ deletions. This leads to the definition of a \emph{deletion-separable matrix} that also provides a simple lower bound on the number of required tests. However, the best-known decoding algorithm given test outcomes with respect to a deletion separable matrix, resorts to a brute force search which is extremely inefficient. In search of an efficient decoding algorithm, we define a sufficient condition on the testing matrix, again taking inspiration from the noiseless group testing literature, called \emph{deletion-resilient disjunct matrices}. Furthermore, we provide randomized constructions of deletion-disjunct matrices with almost optimal number of rows, 
and also devise an efficient decoding algorithms. Finally, we also propose a deterministic construction similar to the Kautz-Singleton construction \cite{kautz1964nonrandom} using codes with strong distance properties. 

\subsection{Related Work}
Group testing and many of its variants are very well-studied and span a long list of excellent works. Here we survey the most relevant works on non-adaptive combinatorial group testing. 

In the noiseless setting, an exact set of $k$ out of $n$ defectives can be recovered using separable matrices \cite{du2000combinatorial}. The columns of these matrices form a binary superimposed code \cite{d2013superimposed} that ensures that the $\OR$-sum of any $k$ codewords are unique. It is well-known \cite{furedi1996onr} that for $n$ such codewords to exist, the code length has to be at least $m = \Omega(k^2 \log n  / \log k)$. A disjunct matrix is another characterization of testing matrices that enables efficient recovery of the defectives in the noiseless setting. The columns of such matrices form a $k$-cover-free family. This property guarantees that the $\OR$-sum of any $k$ columns is not covered by any other column. Various constructions of disjunct matrices have been proposed in the literature \cite{kautz1964nonrandom,porat2008explicit,inan2019optimality}. A random Bernoulli matrix with $m = O(k^2 \log n)$ rows is known to be $k$-disjunct with high probability \cite{d2014lectures}. The first explicit constructions based on Reed-Solomon codes with $m = O(k^2 \log^2 n)$ rows, was proposed by \cite{kautz1964nonrandom}. The construction is based on concatenated codes that use Reed-Solomon codes\cite{doi:10.1137/0108018} (or any MDS code \cite{1053661}) of an appropriate rate as the inner code and a disjunct matrix (in particular, the construction uses the Identity matrix which is a trivial disjunct matrix) to obtain another disjunct matrix with better parameters. This Kautz-Singleton construction paved the way for further explicit constructions of disjunct matrices and some of their variants using different inner code properties. For instance, \cite{porat2008explicit} construct disjunct matrices with $m = O(k^2 \log n)$ with codes at Gilbert-Varshamov bound, and \cite{indyk2010efficiently} construct sublinear time decodable disjunct matrices with a similar number of rows using list  decodable inner codes.

The most notable of these constructions, closely related to this work, include the results of \cite{cheraghchi2009noise}, which constructs explicit testing matrices resilient to bit-flip errors. Their results are a blend of both positive and negative notes. In particular, the author shows that it is impossible to exactly recover the defectives set in the presence of large errors. However, a near-exact recovery of the defectives can be done even with a large number of false-positive outcomes. The construction again uses a Kautz-Singleton-like construction with a slight variant of list recoverable codes as inner codes to obtain noise-resilient testing matrices with $O(d \log n)$ rows. 

In this work, we initiate the study of group testing and investigate the construction of testing matrices and exact recovery algorithms that are resilient to \emph{deletion errors}. 
To the best of our knowledge, deletion errors have not yet been explored in the context of group testing. However, some related areas of research such as channel communication \cite{haeupler2019near,guruswami2016efficiently,haeupler2017synchronization}, DNA storage \cite{cheraghchi2020coded,de2017optimal,grigorescu2022limitations}, and compressive sensing \cite{yao2021unlabeled,peng2021homomorphic} have addressed such noise models. 

Building upon the prior works on exact recovery in group testing, we aim to construct deletion-resilient superimposed codes. The $\OR$-sums of these codewords while satisfying the distinctness conditions, should also be uniquely identified from any of its $m-\Delta$ length subsequences to tolerate $\Delta$ deletions.
This latter condition has been studied in the context of insertion-deletion codes that have received significant attention in coding theory literature since the seminal works of Levenshtein, Varshamov, and Tenengolts \cite{levenshtein1966binary,varshamov1965codes}.
Since then various works have generalized the construction to obtain binary deletion codes of size $n$ and length $m$ such that $n = \Theta(2^m/m^k)$ and can tolerate up to some fixed constant $k$ bits of deletions \cite{sloane2000single,sima2020optimal,mitzenmacher2009survey}. In the high-noise regime, almost optimal constructions with $m$ approaching $O_k(\log n)$ are known \cite{schulman1999asymptotically,guruswami2016efficiently,guruswami2017deletion}. These codes are defined over a slightly larger alphabet and then converted to a binary alphabet by using standard code concatenation techniques. 
More recently, codes using synchronization strings were introduced by \cite{haeupler2017synchronization} that allow a black-box conversion of any error-correcting code to deletion code by using efficient indexing schemes. 

To design deletion-resilient superimposed codes, we will require a handle on the edit distance between $\OR$-sum of distinct sets of binary vectors. The main challenge here arises from the shifts in vectors introduced by deletions which makes deletion errors much harder to handle. We circumvent this issue by defining a suitable sufficient condition on the testing matrix that is not only satisfied by certain matrices with few rows but also facilitates efficient decoding.

\subsection{Notation} 
For any non-negative integer $n$, let $[n] := \{1, \ldots, n\}$. For any vector $v \in \R^n$, let $\supp{v} \subseteq [n]$ denote the set of indices of non-zero entries of $v$. Let $\wt(v)$ denote the number of non-zero entries in $v \in \R^n$, also called its Hamming weight. 
Let $v[i]$ denote the $i$-th entry of $v$. For $a<b$, we let $v[a:b]$ to denote the $b-a+1$-length vector $(v[a], v[a+1], \ldots, v[b])^T$ Also, for any set of indices $S \subseteq [n]$, let 
$v_{\bar{S}} \in \R^{n-|S|}$ denote the vector obtained by deleting the indices in $S$ from $v$. Let $0^n$ (and respectively $1^n$) denote a length-$n$ vector with all $0$s (respectively $1$s). 

For any two binary vectors $u,v$ of length $n$, let us denote their coordinate-wise \OR by $u \vee v$. We will interchangeably use binary vectors $v \in \{0,1\}^n$ as indicators of sets over $[n]$. Therefore, for any two binary vectors, $u,v$ of length $n$, $u \cup v$ will denote the binary indicator vector of the union of the sets indicated by $u$ and $v$, equivalent to $u \vee v$.  For binary vectors $a, b \in \{0,1\}^n$, we say $a \le b$ if $a_i \le b_i$ for every $i \in [n]$. 

For any matrix $A \in \mathbb{R}^{m \times n}$, let $A^i$ be its $i$-th row, and let $A_j$ denote its $j$-th column. The $(i,j)$-th entry of $A$ is represented as $A_j[i]$, $A^i[j]$, or $A[i,j]$ depending on context. 

\subsection{Setup}
We consider the problem of designing pooling strategies that will enable the recovery of all  $k$ defective items from a population of $n$ items using a small number of pooled tests in the presence of deletion noise. 

Let $x^* \in \{0,1\}^n$ be the indicator of ground truth defectives such that $\wt(x^*) \le k$.  A group test, denoted by $\test{a}{x^*}$, corresponds to testing if the set of elements indicated by the pooling vector $a \in \{0,1\}^n$ contains a defective or not. The outcome of such a group test is $1$ if $\supp{a} \cap \supp{x^*} \neq \emptyset$, and $0$ otherwise. Equivalently, the test outcome is given as $\test{a}{x^*} = \bigvee_{i\in \supp{x^*}} a_i$.
Let $y := \test{A}{x^*} \in \{0,1\}^m$ denote the outcomes of $m$ group tests with pooling vectors represented as the rows of the binary matrix $A \in \{0,1\}^{m \times n}$. In other words,  $y := \bigvee_{i \in \supp{x^*}} A_i$ is the $\OR$-sum of upto $k$ columns of $A$. 

Our goal is to design a testing matrix $A \in \{0,1\}^{m \times n}$ with the fewest possible rows that facilitates efficient recovery of $x^*$ given $\ti{y} \in \{0,1\}^{\ti{m}}$ ($m-\ti{m} \le \Delta$) which is a corrupted version of $y = \test{A}{x^*}$ after at most $\Delta$ arbitrary deletions. For shorthand, we use $\tilde{y} = \corr(y, \Delta)$ to denote the deletion corrupted test outcomes.
We will assume the knowledge of both $k$, and $\Delta$. An upper bound on these quantities also suffices for the purpose.

\subsection{Preliminaries}

The following matrices are used for the recovery of defectives in noiseless group testing literature.
\begin{definition}[Separable Matrix]
    A binary matrix $A \in \{0,1\}^{m \times n}$ is a $k$-separable matrix if the \OR-sum of any two subsets $S_1, S_2 \subseteq [n]$ of up to $k$ columns is distinct,
    \[
    \bigvee_{i \in S_1} A_i \neq \bigvee_{i \in S_2} A_i
    \]
\end{definition}

It is well-known that if the testing matrix $A$ is $k$ separable, then it can identify up to $k$ defective elements. 
However, to enable an efficient decoding algorithm, a $k$-disjunct matrix is used. 
\begin{definition}[Disjunct Matrix]
    A binary matrix $A \in \{0,1\}^{m \times n}$ is a $k$-disjunct matrix if for any set of $k+1$ columns, indexed by $\{i_0, i_1, \ldots, i_k\}$, $A_{i_0}$ is not contained in the $\OR$-sum of other columns, 
    \[ A_{i_0} \not \leq \bigvee_{j=1}^k A_{i_j}.\]
\end{definition}

The two above-defined types of matrices are known to be essentially equivalent. It is trivial to see that a $k$-disjunct matrix is also a $k$-separable matrix. However, a $k$-separable matrix is also a $k-1$-disjunct matrix \cite{du2000combinatorial}. 

The decoding algorithm using a disjunct matrix is presented in Algorithm~\ref{alg:disj_dec}. Here given access to $y = \test{A}{x^*}$ and the testing matrix $A$, the algorithm runs in $O(mn)$ time to recover $x^*$ exactly. 
\begin{algorithm}[h!]
\begin{flushleft}
    \textbf{Input}: Binary vector $y = \test{A}{x^*} \in \{0,1\}^{m}$
\end{flushleft}
\begin{algorithmic}[1]
\caption{\DisjDec}
\label{alg:disj_dec}
\State{$\hat{x} = [n]$}
\ForEach{test $i \in [m]$}
    \If{$y_i==0$}
        \State{$\hat{x} \leftarrow \hat x \setminus \{ j \in [n] | A_{i,j} = 1\}$ }
    \EndIf
\EndFor
\State{Return $\hat{x}$}
\end{algorithmic}
\end{algorithm}

The algorithm relies on the property of disjunct matrices that ensures that each non-defective item appears in at least one negative test.

\section{Our Results}

We start by extending the techniques from noiseless group testing. Our first result establishes a necessary condition on the testing matrix $A$ for the recovery of $x^*$ from group tests with at most $\Delta$ deletions. To state the result, we first need the following definitions: 

\begin{definition}(Deletion distance) The deletion distance $d_{del}(x,y)$ between two strings $x$ and $y$ of length $n$ is the maximum number of deletions $d$ such that any $(n-d)$-length subsequences $x'$ of $x$ and $y'$ of $y$ are distinct. 
\end{definition}
Note that deletion distance is equivalent to edit distance with only deletions. Therefore,  it follows that $d_{del}(x,y) = n - \text{lcs}(x,y) - 1$, where $\text{lcs}(x,y)$ is the length of a longest common subsequence of $x$ and $y$. For example, let $x,y\in \{0,1\}^6$, $x= [0,1,0,1,0,0]^T$ and $y = [0,0,0,1,1,0]^T$. The deletion distance $d_{del}(x,y) = 1$. The longest common subsequence is $[0,0,0,0]^T$ (or $[0,0,1,0]$ or $[0,1,1,0]^T$).

Similar to the noiseless group testing setting, we establish the necessary conditions on $A$ based on separable matrices that ensure that the test outcomes corresponding to any two distinct sets of defectives are different. 

\begin{definition}[$(k, \Delta)$-deletion separable matrix]
We call a matrix $A \in \{0,1\}^{m \times n}$ to be $(k, \Delta)$-deletion separable matrix if the \OR-sum of any two distinct subsets $S_1, S_2 \subseteq [n]$ of up to $k$ of its columns has a deletion distance of at least $\Delta$
\[
d_{del}(\bigvee_{i \in S_1} A_i, \bigvee_{i \in S_2} A_i) \ge \Delta.
\]
\end{definition}

Note that $(k, \Delta)$ - deletion separable matrices are necessary for the unique recovery of the defective set after $\Delta$ deletions. Otherwise, there will exist two distinct sets of defectives that will have the same test outcomes after $\Delta$ deletions, and therefore we cannot distinguish between them.

The deletion separable condition is also sufficient for recovery, albeit with an inefficient brute-force decoder. In particular, an upper bound on the number of tests sufficient for exact recovery can be obtained using a randomized construction of $(k, \Delta)$ - deletion separable matrix. However, the decoding algorithm must do a brute-force search over all $O(\binom{n}{k})$ sets to recover the ground truth $x^*$. Instead, we show that a trivial construction based on replicating $k$-disjunct matrices will give us a $(k, \Delta)$-deletion separable matrix with an efficient decoder (Section~\ref{sec:del-sep}). 

\begin{theorem}\label{thm:del-dist}
There exists a $(k, \Delta)$ - deletion separable matrix $A \in \{0,1\}^{m \times n}$ with $m = O(\Delta \cdot k^2\log{n})$. 
\end{theorem}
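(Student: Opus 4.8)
The plan is to take a $k$-disjunct matrix $B \in \{0,1\}^{m' \times n}$ with $m' = O(k^2 \log n)$ rows (which exists, e.g., from a random Bernoulli construction), and form $A$ by stacking $\Delta+1$ (or a small constant times $\Delta$) copies of $B$ vertically, so that $A$ has $m = (\Delta+1) m' = O(\Delta \cdot k^2 \log n)$ rows. The intuition is that if the $\OR$-sums $u := \bigvee_{i \in S_1} A_i$ and $v := \bigvee_{i \in S_2} A_i$ differ in even a single row of $B$, then they differ in that row in every one of the $\Delta+1$ replicated blocks; these $\Delta+1$ well-separated disagreement positions should force a large deletion distance. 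First I would fix two distinct sets $S_1 \neq S_2$ of size at most $k$ and, using the fact that a $k$-disjunct matrix is $k$-separable, note that the restrictions $\bigvee_{i \in S_1} B_i$ and $\bigvee_{i \in S_2} B_i$ are distinct, hence disagree in at least one coordinate $j^* \in [m']$.

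Next I would lower-bound $d_{del}(u,v)$ using the characterization $d_{del}(u,v) = |u| - \mathrm{lcs}(u,v) - 1$ (here $|u| = |v| = m$). It suffices to show $\mathrm{lcs}(u,v) \le m - (\Delta+1)$, i.e. that any common subsequence misses at least $\Delta+1$ positions. The key step is a counting/alignment argument: within each of the $\Delta+1$ blocks, $u$ and $v$ agree on all coordinates except (at least) position $j^*$ of that block — actually they may disagree in more, but at least one disagreement per block is what we use. Given any common subsequence realized by an alignment (a monotone matching between positions of $u$ and positions of $v$), I would argue that the alignment must "skip" at least one position per block: more carefully, consider the $\Delta+1$ indices $p_1 < p_2 < \cdots < p_{\Delta+1}$ corresponding to coordinate $j^*$ in each successive block, where WLOG $u[p_t] = 1, v[p_t] = 0$ for all $t$ (the other case is symmetric; if the disagreement direction varies across blocks the argument only gets easier). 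A matched pair $(a,b)$ in the alignment with $u[a] = v[b]$ cannot have $a$ equal to any $p_t$ with $b$ also "in block $t$ or later matched consistently" — I would make this precise by a potential/interleaving argument showing each $p_t$ is responsible for at least one unmatched coordinate, and these are distinct across $t$. This yields $\mathrm{lcs}(u,v) \le m - (\Delta+1)$ and hence $d_{del}(u,v) \ge \Delta$.

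The main obstacle I anticipate is making the alignment argument fully rigorous: it is tempting but false to say "$u$ and $v$ disagree in $\Delta+1$ positions, so lcs $\le m - (\Delta+1)$" in general (Hamming disagreements do not straightforwardly lower-bound deletion distance). What saves us is that the disagreements are at the \emph{same} relative coordinate $j^*$ in each block and the blocks are identical copies, so the disagreement positions are "spread out" with large identical stretches between them. The clean way to handle this is probably induction on $\Delta$: peel off the last block, observe that the best alignment of $u,v$ either matches all of the last block's agreeing coordinates (in which case $p_{\Delta+1}$ contributes an unmatched coordinate and we recurse on the first $\Delta$ blocks, which have $u,v$ agreeing except at $p_1,\dots,p_\Delta$), or fails to do so (in which case it already wastes a coordinate in the last block and we still recurse). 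I would also need a clean base case ($\Delta = 0$: a single disagreement gives $\mathrm{lcs} \le m-1$, which is immediate). Finally I would remark that this same stacked-$B$ construction admits the efficient decoder promised in Section~\ref{sec:del-sep} (run the disjunct decoder on a correctly identified block), tying back to the discussion preceding the theorem.
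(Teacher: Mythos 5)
There is a genuine gap, and it lies in the construction itself, not merely in the rigor of the alignment argument. You build $A$ by stacking $\Delta+1$ whole copies of $B$, so the two outcome vectors are the periodic strings $u=w_1^{\Delta+1}$ and $v=w_2^{\Delta+1}$ with $w_1=\bigvee_{i\in S_1}B_i\neq w_2=\bigvee_{i\in S_2}B_i$, and your key claim is $\mathrm{lcs}(u,v)\le m-(\Delta+1)$. That claim is false for this construction: take $B=I_2$ (a legitimate $1$-disjunct matrix), $S_1=\{1\}$, $S_2=\{2\}$, so $w_1=10$ and $w_2=01$. Then $u=(10)^{\Delta+1}$ and $v=(01)^{\Delta+1}$, and deleting the \emph{last} symbol of $u$ and the \emph{first} symbol of $v$ produces the same string $(10)^{\Delta}1$; hence $\mathrm{lcs}(u,v)=m-1$ and $d_{del}(u,v)=0$, no matter how many copies you stack. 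The same one-deletion shift attack works for $B=I_n$ (which is $k$-disjunct for every $k\le n-1$), so whole-copy stacking is not even $(1,1)$-deletion separable in general, and nothing in your argument (which uses only separability of $B$) excludes such shift attacks for other choices of $B$. Your saving intuition is exactly backwards: disagreements that are spread out periodically with identical stretches between them are the \emph{cheapest} to dodge, since a single deletion re-aligns all later blocks; this is also why the inductive peel-off fails, because an optimal alignment need not respect block boundaries.

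The paper uses the other natural replication: each \emph{row} of $B$ is repeated $\Delta+1$ times consecutively. Then a coordinate $\ell$ where the two \OR-sums over $B$ differ becomes a contiguous run, with the $\ell$-th block of $y_1$ equal to $0^{\Delta+1}$ and that of $y_2$ equal to $1^{\Delta+1}$; after at most $\Delta$ deletions in each string, the symbol at position $\ell(\Delta+1)-\Delta$ of each shortened string still originates from that block, so the shortened strings disagree there and $d_{del}(y_1,y_2)\ge\Delta$. No lcs computation or induction is needed. If you change your construction to this per-row (run-length) repetition, your proof collapses to that one-line alignment fact, and the efficient decoding you allude to (greedy run completion followed by the standard disjunct decoder) goes through as in Section~\ref{sec:del-sep}.
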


\begin{proof}[Proof of Theorem~\ref{thm:del-dist}]

Let $B \in \{0,1\}^{m' \times n}$ be a $k$-disjunct matrix with $m'$ rows. Let $A$ be the matrix obtained by repeating each row of $B$, $\Delta + 1$ times. In particular, for each $j \in [m']$,
\begin{equation}\label{eq:matrix-cons}
A_k = B_j \text{ for all } k \in [(j-1)(\Delta+1)+1, j(\Delta+1)]    
\end{equation}

Since constructions of $k$-disjunct matrices $B$ with $O(k^2 \log n)$ rows exist~\cite{porat2008explicit}, we can obtain the matrix $A$ with $m = O(\Delta k^2 \log n)$ rows. 

We now show that the matrix $A$ obtained as described above is indeed a $(k, \Delta)$-deletion separable matrix. First, note that the matrix $B$ is a $k$-separable matrix since every disjunct matrix is also separable. Therefore, for any two subsets $S_1, S_2 \subseteq [n]$ of $k$ columns of $B$, $\vee_{i \in S_1} B_i \neq \vee_{i \in S_2} B_i$. In other words, without loss of generality, there exists an index $\ell \in [m']$ such that  $(\vee_{i \in S_1} B_i)[\ell] = 0$ and $(\vee_{i \in S_2} B_i)[\ell] = 1$. Therefore, by the construction of matrix $A$, 
the $\ell$-th block of the vectors $y_1 := (\vee_{i \in S_1} A_i)$, and $y_2:=(\vee_{i \in S_2} A_i)$ will satisfy
\[
y_1[(\ell-1)(\Delta+1)+1: \ell(\Delta+1)] = 0^{\Delta+1} \qquad y_2[(\ell-1)(\Delta+1)+1: \ell(\Delta+1)] = 1^{\Delta+1}. 
\]
So on deleting any set of at most $\Delta$ entries of $y_1$ and $y_2$, there will exist at least one index from the $\ell$-th block of $y_1$ and $y_2$ that aligns and, therefore, will be unequal. So, the deletion distance between $y_1$ and $y_2$ will be at least $\Delta$. 
\end{proof}

In particular, sufficient repetitions of a disjunct matrix ensure that the outcomes corresponding to the underlying disjunct matrix can be recovered even after $\Delta$ deletions. The decoding algorithm then simply uses these outcomes to recover the ground truth. The algorithm described in Section~\ref{sec:del-sep} then runs in $O(m + mn/\Delta)$ time. 

We note that the upper bound on $m$, established in Theorem~\ref{thm:del-dist} is quite prohibitive due to the $\Delta$ factor increase compared to the noiseless setting. 
In pursuit of a better upper bound, we will define alternate sufficient conditions on $A$ for the exact recovery of $x^*$. We start by defining the following asymmetric version of deletion distance. 
\begin{definition}[Asymmetric deletion distance]
The asymmetric deletion distance $d_{adel}(x,y)$ between two strings $x$ and $y$ of length $n$ is the maximum number of deletions $d$ such that for any $(n-d)$-length subsequences $x'$ and $y'$ of $x$ and $y$ there exists a position $i$ such that $x'[i]=1$ and $y'[i]=0$.
\end{definition}

Note that the asymmetric deletion distance is stronger than the notion of deletion distance. It not only ensures that the deletion distance between $x$ and $y$ is larger than $\Delta$ but also guarantees the existence of a position $i \in [n-\Delta]$ in every substring  $x'$ and $y'$ of $x$ and $y$ respectively where $x'[i] > y'[i]$. We will call such an index $i$ the position of a $1-0$ match in $(x', y')$. For example, let $x,y\in \{0,1\}^7$, $x=[1,0,1,1,0,1,1]^T$ and $y=[0,0,0,0,1,0,0]^T$. The asymmetric deletion distance $d_{adel}(x,y) = 2$. Since for any 4-length subsequence, $x'$ and $y'$, $x'$ at least has two $1$, and $y'$ at most has one $1$. For 3-length subsequence $x'$ and $y'$, both can be $[0,0,1]^T$ and there is no $1-0$ match in $(x',y')$.

Inspired by work on noiseless group testing, we now define the following sufficient condition on $A$: $ x*$ can be recovered with significantly fewer tests, thereby bringing us closer to the known lower bound. 
\begin{definition}[$(k, \Delta)$-deletion disjunct matrix]{\label{deletion-disjunct}}
We call a matrix $A \in \{0,1\}^{m \times n}$ to be $(k, \Delta)$-deletion disjunct matrix if for any set of $k+1$ columns $A_{i_0}, A_{i_1},\ldots,A_{i_k}$ the asymmetric deletion distance between $A_{i_0}$ and $\vee_{s=1}^{k} A_{i_s}$ is at least $\Delta$, 
\[
d_{adel}(A_{i_0}, \bigvee_{s=1}^{k} A_{i_s}) \ge \Delta.
\]
\end{definition}

We remark that deletion disjunct matrices are sufficient for recovery since they are also deletion separable. This fact is formalized below. 
\begin{lemma}\label{lem:delUF-delDis}
A $(k, \Delta)$-deletion disjunct matrix is also a $(k, \Delta)$-deletion separable matrix. 
\end{lemma}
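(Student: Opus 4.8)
The plan is to show the contrapositive-flavored statement directly: assuming $A$ is $(k,\Delta)$-deletion disjunct, I take any two distinct subsets $S_1, S_2 \subseteq [n]$ of size at most $k$ and argue that $d_{del}(\bigvee_{i\in S_1} A_i, \bigvee_{i\in S_2} A_i) \ge \Delta$. Write $y_1 = \bigvee_{i\in S_1} A_i$ and $y_2 = \bigvee_{i \in S_2} A_i$. Since $S_1 \neq S_2$, without loss of generality there is an index $i_0 \in S_1 \setminus S_2$. The idea is to compare the single column $A_{i_0}$ against the OR-sum over $S_2$ and then "inflate" $S_2$ up to size $k$ so that the deletion-disjunct guarantee applies, while keeping $i_0$ outside the inflated set.

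The first step is to enlarge $S_2$ to a set $S_2'$ with $|S_2'| = k$ and $i_0 \notin S_2'$ — this is possible since $|S_2| \le k$ and $n$ is large, so we can pad with arbitrary columns other than $i_0$. Then $\{i_0\} \cup S_2'$ is a set of $k+1$ columns, so Definition~\ref{deletion-disjunct} gives $d_{adel}(A_{i_0}, \bigvee_{j\in S_2'} A_j) \ge \Delta$. The second step is to transfer this to $y_1$ and $y_2$. For any $d \le \Delta - 1$ and any pair of $(m-d)$-length subsequences $y_1', y_2'$ of $y_1, y_2$ obtained by deleting a common set of positions (the relevant notion for deletion distance is that after deleting positions we must still be able to tell the strings apart, which reduces to showing no aligned pair of subsequences can be equal): since $A_{i_0} \le y_1$ coordinatewise, deleting the same positions from $A_{i_0}$ and from $y_1$ preserves the inequality $A_{i_0}' \le y_1'$ on the surviving coordinates; similarly $y_2 \le \bigvee_{j \in S_2'} A_j$ coordinatewise, so $y_2' \le (\bigvee_{j\in S_2'}A_j)'$. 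By asymmetric deletion distance, for the corresponding subsequences of $A_{i_0}$ and $\bigvee_{j\in S_2'} A_j$ there is a position $p$ with a $1$-$0$ match, i.e. $A_{i_0}'[p] = 1$ and $(\bigvee_{j\in S_2'} A_j)'[p] = 0$. Chaining the inequalities, $y_1'[p] \ge A_{i_0}'[p] = 1$ and $y_2'[p] \le (\bigvee_{j\in S_2'} A_j)'[p] = 0$, so $y_1'[p] \neq y_2'[p]$, hence $y_1' \neq y_2'$.

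The one subtlety to get right is the exact combinatorial definition of deletion distance versus subsequences: $d_{del}(x,y) \ge \Delta$ should be unpacked as "for every $d < \Delta$, no $(n-d)$-subsequence of $x$ equals any $(n-d)$-subsequence of $y$," and I need the argument above to quantify correctly over all such subsequence pairs — the key observation being that the $1$-$0$ match furnished by $d_{adel}$ at the level of $(A_{i_0}, \bigvee_{j\in S_2'}A_j)$ is stated for *every* pair of $(m-d)$-subsequences, so monotonicity lets me push it through to *every* pair of $(m-d)$-subsequences of $(y_1, y_2)$. I expect this quantifier-alignment step — making sure "for all subsequences" on the disjunct side really does cover "for all subsequences" on the separable side via the coordinatewise domination — to be the only place requiring care; everything else is the monotonicity of OR under coordinate deletion, which is immediate. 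Concluding, $d_{del}(y_1, y_2) \ge \Delta$ for every distinct pair $S_1, S_2$, so $A$ is $(k,\Delta)$-deletion separable.
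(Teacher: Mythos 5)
Your argument is correct and is essentially the paper's own proof run in the direct rather than contrapositive direction: pick $i_0 \in S_1\setminus S_2$, pad $S_2$ to a $k$-set $S_2'$ avoiding $i_0$, and push the $1$-$0$ match guaranteed by $d_{adel}(A_{i_0},\bigvee_{j\in S_2'}A_j)\ge\Delta$ through the coordinatewise dominations $A_{i_0}\le y_1$ and $y_2\le\bigvee_{j\in S_2'}A_j$ (the paper instead supposes the two OR-sums coincide after deletions $T_1,T_2$ and deduces $(A_i)_{\overline{T_1}}\le(\bigvee_{j\in S_2}A_j)_{\overline{T_2}}$, i.e.\ no $1$-$0$ match). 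One small correction: under the paper's convention, $d_{del}(y_1,y_2)\ge\Delta$ means all pairs of $(m-\Delta)$-length subsequences are distinct, so you must run your argument for every $d\le\Delta$, not just $d\le\Delta-1$; this costs nothing, since $d_{adel}\ge\Delta$ supplies the $1$-$0$ match exactly for $(m-\Delta)$-length subsequences, so the same step closes the case $d=\Delta$. Your write-up also makes explicit two details the paper glosses over, namely choosing $i_0\notin S_2$ and padding $S_2$ to exactly $k$ columns before invoking the disjunctness of $k+1$ columns.
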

\begin{proof}
Suppose the binary matrix  $A \in \{0,1\}^{m \times n} $ is a $(k, \Delta)$-deletion disjunct matrix but is not $(k, \Delta)$-deletion separable. From the definition of a deletion separable matrix, there exist two distinct subsets of $k$ columns $S_1, S_2 \in [n]$ of $A$ such that 
$d_{del}(\vee_{i \in S_1} A_i, \vee_{j \in S_2} A_j) < \Delta$. 

Therefore, for some sets of $\Delta$ deletions indexed by rows $T_{1}, T_{2} \subseteq [m]$ , $|T_{1}| = |T_{2}| = \Delta$, 
the vectors $\vee_{i \in S_1} A_i$, and $\vee_{i \in S_2} A_i$ become indistinguishable after deleting entries from $T_1$ and $T_2$ respectively, i.e., 
$$(\bigvee_{i \in S_1} A_i)_{\overline{T_1}}  =  ( \bigvee_{j \in S_2} A_j)_{\overline{T_2}}.$$

Therefore,  $\forall i \in S_1$, $\supp{(A_i)_{\overline{T_1}}} \subseteq \supp{( \bigvee_{j \in S_2} A_j)_{\overline{T_2}}}$. In other words, $(A_i)_{\overline{T_1}} \le ( \bigvee_{j \in S_2} A_j)_{\overline{T_2}}$, and hence, there is no $1-0$ match between $(A_i)_{\overline{T}_i}$ and $(\vee_{j \in S_2} A_j )_{\overline{T}_2}$. Therefore, $d_{adel}(A_i,\vee_{j \in S_2} A_j )\le\Delta$, 
which gives a contradiction to the assumption that $A$ is a $(k, \Delta)$-deletion disjunct matrix. 
\end{proof}

The equivalence of the two matrices enables us to establish a lower bound on the number of tests. 
\begin{theorem}\label{thm:del-dist-new}
Let $A \in \{0,1 \}^{m \times n}$ be a $(k, \Delta)$-deletion disjunct matrix, then 
\begin{center}
    $m =\Omega(k^2 \log n / \log k + k\Delta)$. 
\end{center}
\end{theorem}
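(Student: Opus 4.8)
The plan is to prove the lower bound $m = \Omega(k^2 \log n / \log k + k\Delta)$ by splitting it into its two additive terms and arguing each separately, exploiting the fact (Lemma~\ref{lem:delUF-delDis}) that a $(k,\Delta)$-deletion disjunct matrix is in particular $(k,\Delta)$-deletion separable, and that a $(k,\Delta)$-deletion separable matrix is a fortiori an ordinary $k$-separable matrix (taking $\Delta \ge 1$, distinctness of the $\OR$-sums is implied, and one can also invoke that a $k$-separable matrix is $(k-1)$-disjunct as recorded in the preliminaries). The first term $\Omega(k^2 \log n / \log k)$ is then immediate: since $A$ is $k$-separable, its columns form a binary superimposed code, and by the classical Fr\"uredi / Ruszink\'o bound cited in the introduction \cite{furedi1996onr,ruszinko1994upper} any such code on $n$ codewords has length $m = \Omega(k^2 \log n / \log k)$. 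So it remains to establish the additive $\Omega(k\Delta)$ term.

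For the $\Omega(k\Delta)$ term, the plan is to show that any $(k,\Delta)$-deletion disjunct (or even just deletion-separable) matrix must have at least $\Delta + 1$ ``active'' rows associated with each of a suitable collection of columns, and to amortize this over roughly $k$ essentially disjoint column groups. Concretely, I would first show that a single column $A_{i_0}$ must itself have deletion distance at least $\Delta$ from some $\OR$-sum it dominates in the degenerate cases, which forces $A_{i_0}$ to contain a run of at least $\Delta+1$ consecutive $1$'s or, more carefully, to differ from the all-zero vector (the $\OR$ of an empty set of the remaining columns, or a zero column if one existed) in a way that survives $\Delta$ deletions — this gives $\wt(A_{i_0}) \ge \Delta + 1$ for columns that are "isolated". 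More robustly, take any column and consider pairs $(S_1, S_2)$ differing in a single index; deletion-separability of their $\OR$-sums at distance $\ge \Delta$ forces the symmetric difference of these $\OR$-sums to be large enough that after any $\Delta$ deletions on each side a $1$ fails to align with a $0$, which by a counting/pigeonhole argument on run-lengths forces $\Omega(\Delta)$ rows to be "charged" to that index. Picking $k$ indices whose pairwise incomparability lets us charge disjoint sets of rows (e.g., via the $k$-disjunctness that follows from separability, each of $k$ columns has a private negative test in each of $k$ chosen singleton-vs-pair comparisons), we sum to get $m = \Omega(k\Delta)$. Combining with the first term and noting $\max(a,b) = \Omega(a+b)$ yields the claim.

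The main obstacle I anticipate is making the $\Omega(k\Delta)$ argument genuinely multiplicative in $k$ rather than just recovering $\Omega(\Delta)$ from a single column: one must argue that the $\Delta+1$ "robustness" rows demanded by different columns cannot all be reused, i.e., that there is no single cheap block of $\Delta+1$ repeated rows that simultaneously certifies deletion-separability for all $\binom{n}{k}$ pairs of defective sets. The cleanest route is probably to fix $k$ disjoint "coordinates of the code" in the Kautz-Singleton sense — or, failing any structure, to use that $A$ restricted to $k$ well-chosen columns must itself be $(1,\Delta)$-deletion disjunct on a sub-population, and that the union bound/charging over these $k$ columns gives disjoint row-sets because each column needs a $1$ that beats a $0$ in a location where the others are $0$, which is exactly the disjunctness property localized to $\Delta+1$ shifted copies. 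I would also double-check the edge case $\Delta = 0$, where the bound degenerates to the standard $\Omega(k^2 \log n/\log k)$ and the argument is vacuous for the second term, and the case $k = 1$, where deletion disjunctness forces every column to survive $\Delta$ deletions against the zero vector, immediately giving $m \ge \Delta + 1 = \Omega(\Delta) = \Omega(k\Delta)$.
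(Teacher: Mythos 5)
Your handling of the first term is fine and matches the paper: deletion-disjunct implies deletion-separable (Lemma~\ref{lem:delUF-delDis}), hence $k$-separable, and the F\"uredi bound gives $m=\Omega(k^2\log n/\log k)$. The gap is in the $\Omega(k\Delta)$ term, and it is exactly the part you flag as your ``main obstacle'' without resolving it. The paper's proof does not need any of the machinery you sketch (singleton-vs-pair deletion-separability, run-length pigeonhole, restriction to a sub-population); it works directly from the asymmetric-deletion-disjunct definition on a \emph{single fixed} set of $k+1$ columns $S=\{i_0,\dots,i_k\}$. The quantitative core, which your proposal never establishes, is: $d_{adel}\bigl(A_{i_0}, \bigvee_{s=1}^{k}A_{i_s}\bigr)\ge\Delta$ forces at least $\Delta+1$ rows $j$ with $A_{i_0}[j]=1$ and $\bigl(\bigvee_{s=1}^k A_{i_s}\bigr)[j]=0$ --- because if there were at most $\Delta$ such rows one could delete those same positions from both strings, preserving the alignment and leaving no $1$--$0$ match, contradicting the distance bound. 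Once you have this, the disjointness you worry about is automatic: swapping the role of $i_0$ with each $i_u\in S$, the $\ge\Delta+1$ rows charged to column $i_u$ are rows where \emph{every other} column of $S$ is zero, so no row can be charged to two different columns of $S$, and $m\ge(k+1)(\Delta+1)$ follows with no further construction of ``well-chosen'' columns or disjoint coordinate blocks.

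Your proposed substitute routes would not easily close this. Working from symmetric deletion separability of sets differing in one index is weaker than what you need: $d_{del}$ is governed by the longest common subsequence and large $d_{del}$ between vectors $u\le v$ does not yield many \emph{aligned} $1$--$0$ positions (for instance $u=(0,1,0,1)$ and $v=(1,1,1,1)$ differ in two coordinates yet have deletion distance $1$), so the ``counting/pigeonhole on run-lengths'' step is not a proof and would at best recover $\Omega(\Delta)$ for one column, precisely the multiplicativity problem you identify. So the overall charging strategy is directionally right, but the two load-bearing steps --- the $\Delta+1$ private-row count from the asymmetric distance, and the automatic disjointness within one $(k+1)$-set --- are missing, and the routes you propose in their place would need substantially more (and different) work.
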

\begin{proof}
Consider any fixed set of $ k+1 $ columns of $A$ indexed by $S = \{i_0, i_1, \ldots, i_k\}$. 
By the definition of a deletion disjunct matrix, there would exist a subset of row indices, $Q \subseteq [m]$ of $|Q| \geq \Delta + 1$,   such that for every $j \in Q$, $A_{i_0}[j] = 1$ and $(\vee_{s=1}^{k} A_{i_s})[j] = 0$. Since, $d_{adel}(A_{i_0}, \bigvee_{s=1}^{k} A_{i_s}) < \Delta$ if this were not the case. 
A similar argument applies to other combinations of columns indexed by $S$ that also satisfy $d_{adel}(A_{i_u}, \bigvee_{i_s\in S\setminus\{i_u\}}^{k}) \geq \Delta$ (role of $i_0$ and $i_u$ is swapped). 
Therefore, each column $A_{i_u}$, $i_u \in S$ has at least $\Delta + 1$ distinct non-zero entries which implies that $m \geq (k+1)(\Delta+1)$.

Additionally, note that since a deletion disjunct matrix is also deletion separable. Therefore, deletion of any $\Delta$ arbitrary rows still results in a $k$-separable matrix. From the results in \cite{furedi1996onr}, this gives us another lower bound: $m - \Delta \geq \Omega\left(\frac{k^2 \log n}{\log k}\right)$. Therefore, from Lemma \ref{lem:delUF-delDis} we obtain $m \geq \max \left\{ (k+1)(\Delta+1), \Omega\left(\frac{k^2 \log n}{\log k} + \Delta\right) \right\}$. 

Thus, combining the two observations, it follows that  $ m = \Omega\left(\frac{k^2 \log n}{\log k} + k\Delta\right)$ for both deletion disjunct and deletion separable matrices.
\end{proof}

Furthermore, we show that a random Bernoulli matrix with $m = O(k^2 \log n + \Delta k)$ rows is deletion disjunct with high probability.
Thereby proving the existence of deletion disjunct matrices with an almost optimal number of rows. 
\begin{theorem}\label{thm:random-del-disj}
    There exists a $(k, \Delta)$-deletion disjunct matrix with $m = O(k^2 \log n + \Delta k)$ rows.
\end{theorem}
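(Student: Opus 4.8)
The plan is to show that a random Bernoulli matrix $A \in \{0,1\}^{m \times n}$, where each entry is set to $1$ independently with probability $p = \Theta(1/k)$, is $(k,\Delta)$-deletion disjunct with positive probability (indeed with high probability) when $m = O(k^2 \log n + \Delta k)$. By Definition~\ref{deletion-disjunct}, I must verify that for every choice of $k+1$ columns $A_{i_0}, A_{i_1}, \ldots, A_{i_k}$, the asymmetric deletion distance $d_{adel}(A_{i_0}, \bigvee_{s=1}^k A_{i_s})$ is at least $\Delta$. The first step is to translate this into a combinatorial statement about $1$-$0$ matches: unwinding the definition of $d_{adel}$, it suffices to show that after deleting \emph{any} set of $\Delta - 1$ coordinates from each of $A_{i_0}$ and $\bigvee_{s=1}^k A_{i_s}$, there still remains a position where the (deleted) $A_{i_0}$ has a $1$ and the (deleted) OR-vector has a $0$. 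Following the argument in the proof of Theorem~\ref{thm:del-dist-new}, the clean sufficient condition is that the set $Q$ of rows $j$ with $A_{i_0}[j] = 1$ and $(\bigvee_{s=1}^k A_{i_s})[j] = 0$ — call these the \emph{private $1$-rows} of column $i_0$ — has size at least $2\Delta - 1$ (or some $c\Delta$); then no matter which $\Delta - 1$ rows are deleted on each side, an aligned $1$-$0$ match survives. So the whole theorem reduces to: with high probability, for every ordered $(k+1)$-tuple of distinct columns, the designated column has at least $\Omega(\Delta)$ private $1$-rows, \emph{and additionally} enough such rows to handle the noiseless disjunctness requirement.

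The second step is the probabilistic estimate. Fix columns $i_0, i_1, \ldots, i_k$. For a fixed row $j$, the event ``$A_{i_0}[j]=1$ and $A_{i_s}[j]=0$ for all $s \in [k]$'' has probability $p(1-p)^k$, which for $p = 1/(k+1)$ is $\Theta(1/k)$ (bounded below by a constant times $1/k$ since $(1-1/(k+1))^k \to e^{-1}$). These events are independent across rows $j$, so the number $|Q|$ of private $1$-rows is $\mathrm{Binomial}(m, \Theta(1/k))$ with mean $\mu = \Theta(m/k)$. I then want $|Q| \geq c\Delta$ and, for the noiseless part, I effectively also want $\mu = \Omega(\log n)$-ish slack so the Chernoff bound beats the union bound; choosing $m = C(k^2 \log n + k\Delta)$ makes $\mu = \Theta(k\log n + \Delta)$, which comfortably dominates both $\Delta$ and $\log n$. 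A Chernoff lower-tail bound gives $\Pr[|Q| < \mu/2] \leq e^{-\mu/8} \leq e^{-\Omega(k \log n + \Delta)}$.

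The third step is the union bound over all $\binom{n}{k+1} \cdot (k+1) \le n^{k+1}$ choices of an ordered $(k+1)$-tuple (the extra factor $k+1$ because $d_{adel}$ is asymmetric and we must designate which column plays the role of $A_{i_0}$, and we need the condition for each designation so that Definition~\ref{deletion-disjunct} holds as stated). Since $n^{k+1} = e^{(k+1)\ln n}$ and the failure probability per tuple is $e^{-\Omega(k\log n + \Delta)}$, choosing the constant $C$ large enough makes the product $o(1)$; hence a good matrix exists. The main obstacle I anticipate is getting the reduction in the first step exactly right — precisely, pinning down the threshold ($c\Delta$ private rows) that guarantees an aligned $1$-$0$ match survives \emph{every} pair of $(\Delta-1)$-deletions on the two sides, since the OR-vector and $A_{i_0}$ are deleted independently and a deletion in one shifts alignment relative to the other; the cleanest route is to observe that the surviving private $1$-rows form a common subsequence witness and count how many can be destroyed, which should only cost a factor of $2$. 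A secondary technical point is verifying that $p = \Theta(1/k)$ simultaneously optimizes the private-row probability and does not hurt the constants — a standard computation. Everything else is routine Chernoff-plus-union-bound bookkeeping.
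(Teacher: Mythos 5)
Your overall skeleton (Bernoulli matrix with $p=\Theta(1/k)$, a Chernoff estimate, and a union bound over $(k+1)$-tuples of columns) matches the paper's construction, but the step you yourself flag as the ``main obstacle'' is a genuine gap, and it cannot be patched the way you hope. You reduce the deletion-disjunct property to the claim that if column $i_0$ has $c\Delta$ \emph{private $1$-rows} (rows where $A_{i_0}$ is $1$ and $\bigvee_{s=1}^k A_{i_s}$ is $0$), then a $1$-$0$ match survives every pair of $\Delta$-deletions, ``only costing a factor of $2$.'' This is false: deletions shift the alignment between the two vectors, so one deletion on each side can destroy \emph{all} private matches at once. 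Concretely, take $x = 1010\cdots10$ and $y = 0101\cdots01$ of length $m$: then $x$ has $m/2$ private $1$-rows against $y$, yet deleting the first entry of $x$ and the last entry of $y$ makes the two subsequences identical, so $d_{adel}(x,y)=0$. Hence no threshold of the form $c\Delta$ on the number of private rows implies $d_{adel}\ge\Delta$; that count is only a \emph{necessary} condition (which is how it is used in the lower bound, Theorem~\ref{thm:del-dist-new}), not a sufficient one. As a consequence, your union bound, which ranges only over column tuples and never over deletion patterns, does not establish the property in Definition~\ref{deletion-disjunct}.

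The paper handles the alignment issue head-on: for each pair $(T,j)$ it additionally union bounds over all $\binom{m}{\Delta}^2$ choices of deletion sets $S_1,S_2$. Once $S_1,S_2$ are fixed, the aligned positions of $(A_j)_{\overline{S_2}}$ and $(\vee_{i\in T}A_i)_{\overline{S_1}}$ involve distinct, independent matrix entries, so the probability of no $1$-$0$ match is $(1-p(1-p)^k)^{m-\Delta}$, and the extra factor $\binom{m}{\Delta}^2 \le (em/\Delta)^{2\Delta}$ is exactly what the $\Delta k$ term in $m$ is there to absorb. If you want a deterministic ``sufficient structure'' in the spirit of your private-row condition, it must be alignment-robust --- for example $\Delta+1$ consecutive rows that are all $1$ in $A_{i_0}$ and all $0$ in the OR, which is what the repetition construction of Theorem~\ref{thm:del-dist} provides --- but such runs are far too rare in a Bernoulli$(1/k)$ matrix to yield $m = O(k^2\log n + \Delta k)$. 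So the union bound over deletion sets (or some comparable device) is essential to the argument, not routine bookkeeping.
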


We not only get closer to the established lower bound but also obtain an efficient decoding algorithm to recover $x^*$ given any deletion disjunct matrix. The proof of Theorem~\ref{thm:del-disj-dec} proposes a non-trivial generalization of Algorithm~\ref{alg:disj_dec} that is robust to deletions, and runs in $O(mn)$-time. 

\begin{theorem}\label{thm:del-disj-dec}
Let $A \in \{0,1\}^{m\times n}$ be a $(k,\Delta)$-deletion disjunct matrix. There exists a decoding algorithm that runs in time $O(mn)$ and recovers $x^*$ given $\ti{y} = \corr(\test{A}{x^*}, \Delta)$.
\end{theorem}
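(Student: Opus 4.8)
The plan is to mimic Algorithm~\ref{alg:disj_dec}, where in the noiseless case each negative test lets us eliminate every item appearing in that test, but now we must cope with the fact that we do not know which row of $A$ a given entry $\ti y[t]$ of the received string corresponds to. The key structural idea is that a $(k,\Delta)$-deletion disjunct matrix guarantees, for every non-defective item $j$, a $1$--$0$ match between $A_j$ and $y = \test{A}{x^*}$ that survives $\Delta$ deletions. Concretely, since $\wt(x^*)\le k$, writing $y=\bigvee_{i\in\supp{x^*}}A_i$, the disjunct property applied to the column set $\{j\}\cup\supp{x^*}$ gives $d_{adel}(A_j, y)\ge\Delta$, so after deleting any $\le\Delta$ entries of $A_j$ (to length $\tilde m$) and the matching deleted entries of $y$ (which yields $\ti y$), there is still a coordinate where $A_j$ restricted shows a $1$ and $\ti y$ shows a $0$ on the same aligned coordinate. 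For a \emph{defective} item $j\in\supp{x^*}$ we have $A_j\le y$ coordinatewise, hence no such $1$--$0$ match can ever appear after any alignment of deletions. This dichotomy is exactly what we will exploit.

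The algorithm: for each item $j\in[n]$, decide whether there exists a way to delete at most $\Delta$ coordinates from column $A_j$, producing a length-$\tilde m$ string $A_j'$, such that $A_j'$ is coordinatewise $\le \ti y$ (equivalently, $\supp{A_j'}\subseteq\supp{\ti y}$). Declare $j$ defective iff no such deletion pattern exists; that is, iff \emph{every} way of deleting $m-\tilde m\le\Delta$ coordinates of $A_j$ leaves a coordinate with a $1$ over a $0$ of $\ti y$. Correctness of this rule follows from the two observations above: if $j\in\supp{x^*}$ then there is a consistent deletion pattern (the one the adversary actually used, restricted to $A_j$, noting $A_j\le y$ and deletions commute with coordinatewise $\le$), so $j$ is (correctly) not declared defective; if $j\notin\supp{x^*}$ then $d_{adel}(A_j,y)\ge\Delta$ forbids any such pattern, so $j$ is (correctly) declared defective. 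Since $\wt(x^*)\le k$ and the matrix is in particular $k$-separable, this recovers $x^*$ exactly.

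The only real work is showing the per-column feasibility test — ``can $A_j$ be edited by at most $\Delta$ deletions into a subsequence dominated by $\ti y$?'' — runs in $O(m)$ time, so that the total is $O(mn)$. I would phrase this as a subsequence/alignment problem: we seek an increasing map from $[\tilde m]$ into $[m]$ whose image, read off from $A_j$, is coordinatewise $\le\ti y$, i.e. at every position $t$ with $\ti y[t]=0$ the chosen coordinate of $A_j$ is $0$. A greedy scan works: walk the pointer through $A_j$ and $\ti y$ together; whenever $\ti y[t]=1$ we may match it to the current $A_j$ coordinate regardless of its value and advance both; whenever $\ti y[t]=0$ we must skip $A_j$-coordinates that are $1$ (charging each skip to the deletion budget) until we reach a $0$ of $A_j$, then match and advance both. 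The feasibility test accepts iff we consume all of $\ti y$ using at most $\Delta$ skips. A short exchange argument shows this greedy alignment is optimal (it never skips a coordinate it did not have to), so it correctly decides feasibility, and it is clearly linear in $m$. The main obstacle, then, is just making this greedy-matching lemma and its optimality proof airtight — in particular verifying that the adversary's actual deletion pattern is always a valid witness for every defective column simultaneously, and that the greedy never fails when a witness exists — after which the $O(mn)$ bound and exact recovery are immediate.
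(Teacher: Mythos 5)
Your approach is the same as the paper's: for each column $j$ decide, via a linear-time greedy two-pointer scan, whether $A_j$ can be reduced by at most $\Delta$ deletions to a string coordinatewise dominated by $\ti{y}$, and use the dichotomy that defective columns admit such a reduction (the adversary's actual deletion set $T^*$ is a witness, since $A_j \le y$ gives $(A_j)_{\overline{T^*}} \le y_{\overline{T^*}} = \ti{y}$) while non-defective columns do not (apply $d_{adel}(A_j, \bigvee_{i \in \supp{x^*}} A_i) \ge \Delta$ against the particular subsequence $\ti{y}$ of $y$). This is precisely the paper's \CheckCov{} (Algorithm~\ref{alg:coverage}) run inside Algorithm~\ref{alg:del-disj-new}, and the exchange argument you defer (``the main obstacle'') is exactly what Lemma~\ref{lem:alg2correct} proves; your greedy and its optimality sketch coincide with the paper's.

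The one concrete error is that your stated decision rule is inverted, so as written the algorithm outputs the complement of $\supp{x^*}$. You ``declare $j$ defective iff no such deletion pattern exists,'' yet by your own dichotomy it is the defective items that \emph{do} admit a dominating deletion pattern and the non-defective items that do not; accordingly, the claims that $j \in \supp{x^*}$ is ``(correctly) not declared defective'' and that $j \notin \supp{x^*}$ is ``(correctly) declared defective'' are both backwards. The fix is a single flip, matching the paper: keep $j$ in the recovered defective set exactly when the feasibility test succeeds, i.e., eliminate $j$ only when \emph{every} way of deleting at most $\Delta$ coordinates of $A_j$ leaves a $1$--$0$ match against $\ti{y}$ (Algorithm~\ref{alg:del-disj-new} sets $\hat{x}[j] = 0$ when \CheckCov{} returns \texttt{False}). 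With that correction, and with the greedy-matching lemma made precise as in Lemma~\ref{lem:alg2correct} (skip a $1$ of $A_j$ only when forced by a $0$ of $\ti{y}$; any other witness's deletions can be exchanged to agree with the greedy's), your argument gives exact recovery in $O(mn)$ time, as in the paper.
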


The proof of Theorem~\ref{thm:del-disj-dec} is presented in Section~\ref{sec:del-disj}. 
We first present a simple algorithm that does better than a naive brute force algorithm and runs in time $O(nm^{\Delta+1})$ (see Algorithm~\ref{alg:del-disj}). While this algorithm can efficiently handle a small number of deletions ($\Delta = \Omega(1)$), the exponential dependence on $\Delta$ makes it prohibitive for larger corruptions. 
We, therefore, optimize a crucial subroutine using Algorithm~\ref{alg:coverage} that reduces the runtime to $O(mn)$. 

While $O(mn)$-time is efficient, it is still prohibitive for certain applications. Similar to the works of \cite{indyk2010efficiently, cai2013grotesque, SAFFRON, guruswami2023noise} for bit-flip errors, we propose a randomized construction of a deletion-disjunct matrix that enables super-efficient (sub-linear time) decoding that runs in $\textsc{poly}(k, \log n)$ time. 

\begin{theorem}\label{thm:singleton}
    There exists a testing matrix $A \in \{0,1\}^{m \times n}$ with $m = O(k^2\log^2 n)$ rows that enables the recovery of $k$ defectives with probability at least $1-1/k$ even after $\Delta = O(\log n)$ deletions in $\textsc{poly}(k, \log n)$ time. 
\end{theorem}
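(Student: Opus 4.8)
The plan is to adapt the Kautz--Singleton concatenation to the deletion setting, using a Reed--Solomon outer code and an inner encoding that simultaneously provides the disjunctness property and a deletion-robust block structure. Concretely, I would let $q$ be a prime power with $q = \Theta(k\log n / \log k)$, take the outer code to be a Reed--Solomon code $C_{\mathrm{out}} \subseteq \mathbb{F}_q^L$ with $L = \Theta(\log_q n)$ and minimum distance $L - k\cdot\text{(degree)} $ chosen so that any two codewords agree in at most a $1/(k+1)$ fraction of positions (this is the usual requirement that makes the concatenated matrix $k$-disjunct). For the inner code I would map each symbol $\alpha \in \mathbb{F}_q$ to a length-$q'$ binary block; the standard Kautz--Singleton choice is the unit vector $e_\alpha \in \{0,1\}^q$, but to get deletion robustness I would instead pad each block with a fixed synchronization pattern (a constant-length run of $0$s followed by a single $1$ marker, or a short Varshamov--Tenengolts-style delimiter) so that the boundaries between the $L$ blocks can be recovered after $\Delta = O(\log n)$ deletions. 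The testing matrix $A$ has $n$ columns (one per outer codeword) and $m = L \cdot q' = O(k^2 \log^2 n)$ rows, matching the claimed bound once $q' = \Theta(q) = \Theta(k\log n/\log k)$.

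The key steps, in order, are: (i) show the concatenated matrix (ignoring deletions) is $k$-disjunct --- this is the classical argument, since if column $A_{i_0}$ were covered by $\bigvee_{s=1}^k A_{i_s}$ then in each of the $L$ blocks the single $1$ of $e_{\alpha^{(i_0)}}$ must be matched by one of the $k$ other codewords, forcing some codeword to agree with $i_0$ in $\ge L/k$ positions, contradicting the distance of $C_{\mathrm{out}}$; (ii) upgrade this to the $(k,\Delta)$-deletion-disjunct property by arguing that the synchronization markers guarantee that after any $\Delta$ deletions at least $L - O(\Delta)$ of the blocks survive intact and correctly aligned, so within those blocks the $1$--$0$-match argument still produces $\ge \Delta+1$ positions where $A_{i_0}$ has a $1$ and $\bigvee_s A_{i_s}$ has a $0$; this requires $L = \omega(\Delta)$, which holds since $L = \Theta(\log n) = \omega(\Delta)$ is false in general --- here the right quantitative statement is that the number of surviving aligned blocks is $L - O(\Delta/q')\cdot$const, and since each block is long ($q' \gg 1$) a single deletion damages only $O(1)$ blocks, leaving $L - O(\Delta)$ good blocks, still $\Delta+1$ of them as long as the outer distance is a constant fraction of $L$ and $L \gtrsim \Delta$; (iii) give the sublinear decoder: use the synchronization markers to parse $\tilde y$ into (most of) its $L$ blocks, in each recovered block read off the surviving symbols, and for each candidate column test agreement --- a defective column agrees with the parsed symbols on all but $O(\Delta + \text{bad blocks})$ positions, so listing all columns consistent with a $\ge 1/(k+1)$-fraction of the parsed blocks and running the list-disjunct-style pruning recovers $\mathrm{supp}(x^*)$; bounding the candidate list by $\mathrm{poly}(k,\log n)$ and the per-block work by $\mathrm{poly}(k,\log n)$ gives the running time. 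The $1 - 1/k$ success probability enters if the inner-code / synchronization construction or a hashing step in the decoder is randomized; I would design the markers deterministically and use randomness only to handle the (rare) event that deletions cluster in a way that destroys too many blocks, whose probability I bound by $1/k$.

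I expect the main obstacle to be step (ii): controlling how adversarial deletions interact with block alignment. Unlike bit-flips, a single deletion near the start of $\tilde y$ shifts every later block, so one must show the synchronization markers let the decoder (and the analysis) re-synchronize after each deletion with only $O(1)$ blocks lost per deletion, and that the $\Delta+1$ required $1$--$0$ matches can always be found among the aligned blocks even when the adversary places all $\Delta$ deletions to straddle marker boundaries. The cleanest way I see to handle this is to make each block long enough ($q' = \Theta(k \log n/\log k) \gg \Delta$ is not automatic, so instead make the marker-to-payload ratio constant and argue a counting bound: at most $O(\Delta)$ blocks are ``touched'' by deletions, and since the outer distance guarantees $\ge (1 - \frac{k}{k+1})L = \frac{L}{k+1}$ positions of strict disagreement, choosing $L = \Theta(k\log n/\log k \cdot \log n)$ ensures $\frac{L}{k+1} - O(\Delta) \ge \Delta + 1$ for $\Delta = O(\log n)$) and then invoke Definition~\ref{deletion-disjunct} directly. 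A secondary obstacle is making the parsing step robust: I would include a short self-synchronizing delimiter whose deletion-corrected versions are still uniquely locatable, citing the synchronization-string machinery of \cite{haeupler2017synchronization} if a black-box indexing argument suffices, which would let the sublinear decoder first recover the block boundaries and then reduce to a noiseless (or $O(\Delta)$-erasure) Kautz--Singleton decoding instance handled by standard list-recovery in $\mathrm{poly}(k,\log n)$ time.
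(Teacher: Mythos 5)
Your plan has a genuine gap at exactly the step you flag as the main obstacle, and the patch you sketch does not close it. With a Kautz--Singleton inner map (unit vectors plus constant-length synchronization markers), the $1$--$0$ matches inside a block are \emph{not} stable under adversarial deletions: within a block the candidate column $A_{i_0}$ has a single $1$ at position $\alpha$ and the OR-sum may have a $1$ at a nearby position $\beta$, and whenever $|\alpha-\beta|\le\Delta$ the adversary can spend a few deletions (in either string, since Definition~\ref{deletion-disjunct} allows independent deletion patterns on both sides) to slide these ones into alignment; this kills the $1$--$0$ match in that block without the block being ``touched'' in your counting sense. Preventing this shift attack is precisely why the paper's deterministic construction demands the $(\Delta,d)$-$\ell_\infty$ property (Theorems~\ref{thm:kddel} and~\ref{thm:code}), i.e.\ the ones of distinct symbols must be separated by more than $\Delta$ positions, and that separation costs a factor-$\Delta$ blow-up in the inner block length --- which is exactly why the paper states this route gives only $m=O(\Delta k^2\log n)$, no better than trivial repetition, and does not yield Theorem~\ref{thm:singleton}. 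Your constant-overhead markers also cannot be uniquely re-located inside an OR of unit vectors after adversarial deletions (the payload is itself mostly zeros with isolated ones, so ``a run of $0$s followed by a $1$'' occurs everywhere), and your own fix of taking $L=\Theta\bigl(\tfrac{k\log n}{\log k}\cdot\log n\bigr)$ already pushes $m$ above the claimed $O(k^2\log^2 n)$. Finally, the way you introduce the $1-1/k$ failure probability --- bounding the probability that ``deletions cluster badly'' --- is not available in this model: the deletions are adversarial, so no probabilistic argument over the deletion pattern is valid; randomness can only enter through the construction and must be analyzed against worst-case deletions.

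For comparison, the paper proves Theorem~\ref{thm:singleton} by a different construction altogether: a SAFFRON-style random bipartite graph with $M=O(k^2\log n+k\log k)$ right nodes and connection probability $1/k$, so that with probability $\ge 1-1/k$ every defective occurs alone in some right node (a singleton); each item's signature is a codeword of an efficiently decodable binary \emph{linear insertion-deletion} code of length $N=\Theta(\log n)$ (Theorem~\ref{thm:good_del_code}) stacked with its bitwise complement, giving blocks of height $h=2N$ and $m=O(k^2\log^2 n)$ total rows. Singleton blocks are then recognized purely by a weight threshold ($\wt\le h/2+\Delta$ versus $\ge h/2+2\Delta$ for non-singletons, Lemmas~\ref{lem:singleton-lem1}--\ref{lem:singleton-lem2}), which sidesteps the block-boundary/re-synchronization problem entirely because at most $\Delta$ deletions only shift block windows by at most $\Delta$ positions, and each detected singleton is decoded by the insdel decoder within its deletion radius (Lemma~\ref{lem:singleton-lem3}) in $\textsc{poly}(\log n)$ time. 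If you want to salvage a concatenated-code route, you would need an inner encoding with genuine deletion-correction (not just separation markers) and an argument that works against worst-case deletion patterns on both the column and the OR-sum; as written, step (ii) fails and the probability accounting is misplaced.
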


Finally, for a deterministic construction of $(k, \Delta)$-deletion disjunct matrices, we adopt a technique similar to the Kautz-Singleton construction \cite{kautz1964nonrandom} using codes with strong distance properties. 

We say a set of $q^K$ vectors $C \subseteq \F_q^N$ is an $(N, K, D)_q$ code if every pair of vectors in $C$ have a Hamming distance of at least $D$. To define the stronger distance property, fix an arbitrary ordering of elements of $\mathbb{F}_q$ as $\alpha_1 < \alpha_2 < \ldots < \alpha_q$. Let $\psi$ be map $\psi: \mathbb{F}_q \rightarrow \mathbb{Z}$ defined as $\psi(\alpha_i) = i$.  

\begin{definition}[$(\Delta, d)-\ell_\infty$ distance property]
    Let $C \subseteq \mathbb{F}_q^N$ be an $(N,K,D)_q$ code. 
    For any $d \le D$, and $\Delta < d$, $C$ is said to have $(\Delta,d)-l_\infty$ distance property if for any two distinct codewords, $c_1,c_2\in C$, there exists a set of at least $d$ indices $S\subseteq [N]$, $|S| \geq d$, such that 
    \[|\psi(c_1[i]) - \psi(c_2[i]) | \geq \Delta ~\forall~ i\in S.\] 
\end{definition}

We show that $(N, K, D)_q$ codes with $(\Delta, d)-\ell_\infty$ distance property give $(k, \Delta)$ disjunct matrices.

\begin{theorem}\label{thm:kddel}
    Let $C\subseteq \mathbb{F}_q^N$ be an $(N,K,D)_q$ code with $(\Delta,d)-\ell_\infty$ property then there exists a $(k,\Delta)$-deletion disjunct matrix for any $k \leq \frac{N}{N-d+\Delta}$. 
\end{theorem}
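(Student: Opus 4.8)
The plan is to mimic the Kautz--Singleton concatenation, using the $\ell_\infty$ distance property of $C$ in place of Hamming distance. Given the code $C\subseteq\F_q^N$, I build the testing matrix $A$ with one column per codeword (so the item set has size $|C|=q^K$, or any subset thereof): the column indexed by $c\in C$ is the concatenation, over the code coordinates $i\in[N]$, of a length-$q$ gadget that places a single $1$ in coordinate $\psi(c[i])$ of the $i$-th block and $0$'s elsewhere. It is convenient to additionally insert $\Delta-1$ all-zero rows between consecutive blocks (so $m=Nq+(N-1)(\Delta-1)$), which suppresses the block-boundary effects in the argument below. Thus each column has exactly $N$ ones, one per ``letter block'', and the one in block $i$ sits at offset $\psi(c[i])$ inside that block. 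Fix any $k+1$ distinct columns $a_0,a_1,\dots,a_k$, coming from codewords $c_0,\dots,c_k$, and set $b:=\bigvee_{s=1}^k a_s$; the task is to show $d_{adel}(a_0,b)\ge\Delta$.

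First I would locate where $a_0$ ``sticks out'' of $b$ robustly. For each $s\in[k]$ apply the $(\Delta,d)$-$\ell_\infty$ property to the pair $c_0,c_s$ to get $S_s\subseteq[N]$, $|S_s|\ge d$, with $|\psi(c_0[i])-\psi(c_s[i])|\ge\Delta$ for every $i\in S_s$, and let $G:=\bigcap_{s=1}^k S_s$ be the set of ``good'' code coordinates. By a union bound $|G|\ge N-k(N-d)$, and the hypothesis $k\le N/(N-d+\Delta)$ is precisely equivalent to $N-k(N-d)\ge k\Delta$, so $|G|\ge k\Delta$. For a good $i\in G$: inside block $i$ the sole $1$ of $a_0$ lies at offset $\psi(c_0[i])$, while every $1$ of $b$ in block $i$ lies at an offset differing from $\psi(c_0[i])$ by at least $\Delta$; together with the $\Delta-1$ zero rows padding each side of the block, this means that in a radius-$(\Delta-1)$ window of row indices centered at the good $1$ of $a_0$, the vector $b$ is identically $0$. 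Hence each of the $\ge k\Delta$ good blocks supplies a row where $a_0$ is $1$, $b$ is $0$, and $b$ stays $0$ throughout a radius-$(\Delta-1)$ neighborhood.

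Then I would run the standard deletion argument. Suppose toward a contradiction that $d_{adel}(a_0,b)<\Delta$; unwinding the definition, there are equal-length subsequences $a_0'$ of $a_0$ and $b'$ of $b$, obtained by deleting fewer than $\Delta$ coordinates from each, with no $1-0$ match, i.e.\ $a_0'\le b'$ coordinatewise. Deleting fewer than $\Delta$ coordinates moves every surviving coordinate by strictly fewer than $\Delta$ positions, so the induced order isomorphism $\sigma$ between the surviving rows of $a_0$ and those of $b$ satisfies $|\sigma(p)-p|<\Delta$, and $a_0[p]=1\Rightarrow b[\sigma(p)]=1$. Since at most $\Delta-1$ rows were deleted from $a_0$ and $|G|\ge k\Delta\ge\Delta$, at least one good block's $1$ of $a_0$ survives; let $p$ be its row index. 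Then $\sigma(p)$ is a $1$ of $b$ lying strictly within $\Delta$ of $p$, but the entire radius-$(\Delta-1)$ window around $p$ is $0$ in $b$ by the previous paragraph (the zero padding guarantees this regardless of where $\psi(c_0[i])$ sits inside its block), a contradiction. Hence $d_{adel}(a_0,b)\ge\Delta$, so $A$ is a $(k,\Delta)$-deletion disjunct matrix.

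The step I expect to be the main obstacle is the interplay between the deletion-induced shifts and the block boundaries: a shift close to $\Delta$ could otherwise carry a surviving good $1$ of $a_0$ out of its own letter block onto a $1$ of $b$ lying in an adjacent block, particularly when $\psi(c_0[i])$ is near a block edge. Inserting $\Delta-1$ zero rows between letter blocks is exactly what rules this out, since any escape to a neighboring block then demands a shift of at least $\Delta$; without that padding one instead needs a more delicate counting argument, using that $\sigma$ is order preserving together with the surplus $|G|\ge k\Delta$ of good blocks, to bound how many good blocks can ``borrow'' a $1$ from a neighbor and chain the consequences. A secondary point to pin down is the exact off-by-one in the definitions of $d_{adel}$ and of the $\ell_\infty$ property: the latter produces zero windows of radius exactly $\Delta-1$, which must be matched against the shift budget implied by ``$d_{adel}\ge\Delta$'', and the construction parameters (block size, amount of spacing or padding) should be set so that these line up.
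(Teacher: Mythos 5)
Your proposal follows essentially the same route as the paper: a Kautz--Singleton-style concatenation in which each symbol becomes a weight-one gadget, zero rows are inserted so that deletion-induced shifts of size at most $\Delta$ cannot carry a $1$ across letter blocks, the $(\Delta,d)$-$\ell_\infty$ property keeps the surviving $1$ of $A_{i_0}$ away from the $1$'s of the other columns in ``good'' blocks, and the hypothesis $k\le N/(N-d+\Delta)$ enters through exactly the same union-bound count of good coordinates (the paper counts blocks with no $1$-$0$ match, you intersect the good sets; these are the same estimate).

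The one point to tighten is the off-by-one you yourself flag. Unwinding $d_{adel}(a_0,b)<\Delta$ should give subsequences obtained by deleting (up to) \emph{exactly} $\Delta$ symbols from each string --- this is the regime the decoder of Theorem~\ref{thm:del-disj-dec} actually relies on --- so the induced shift satisfies $|\sigma(p)-p|\le\Delta$, not $<\Delta$, and likewise at most $\Delta$ good $1$'s of $a_0$ may be deleted. With your $\Delta-1$ rows of spacing and radius-$(\Delta-1)$ zero windows the argument is then one row short at the boundary (a $1$ of $b$ in an adjacent block, or a within-block $1$ at $\ell_\infty$ gap exactly $\Delta$, can land exactly $\Delta$ away and be reached by a shift of exactly $\Delta$), and the survival count needs $|G|\ge\Delta+1$ rather than $\ge\Delta$. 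The fix is just to pad $\Delta$ zeros per block as the paper does and keep one unit of slack in the gap/counting; note the paper's own write-up is loose at the same boundary (gap exactly $\Delta$ versus relative shift exactly $\Delta$, and $\le$ versus $<$ in $k(N-d+\Delta)$ vs.\ $N$), so this is a shared bookkeeping issue rather than a flaw specific to your approach.
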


We also present a simple construction of codes with $(\Delta, d)-\ell_\infty$ distance property. 
\begin{theorem}\label{thm:code}
    Let $q$ be prime power, $\Delta < q$, and $q'$ be a prime power larger than $\Delta q + 1$. Then any $(N,K,D)_q$ code can be embedded into $\mathbb{F}_{q'}$ to obtain a $(N,K,D)_{q'}$ code  with $(\Delta, D)-\ell_\infty$ distance property.
\end{theorem}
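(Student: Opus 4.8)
The goal is to take an arbitrary $(N,K,D)_q$ code $C$ and re-embed its symbols into a larger field $\mathbb{F}_{q'}$ so that, whenever two codewords differ in a coordinate, the corresponding $\psi$-values differ by at least $\Delta$. The natural map is to multiply the $\psi$-index by $\Delta$: define $\phi:\mathbb{F}_q \to \mathbb{F}_{q'}$ by sending the $i$-th element $\alpha_i$ of $\mathbb{F}_q$ (under the fixed ordering $\alpha_1<\cdots<\alpha_q$) to the element of $\mathbb{F}_{q'}$ whose $\psi'$-index is $(i-1)\Delta + 1$ (or simply $i\Delta$), which is well-defined because $q' > \Delta q + 1 \ge \Delta q$ guarantees there are enough elements in $\mathbb{F}_{q'}$ to host all $q$ scaled indices within range. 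Apply $\phi$ coordinatewise to every codeword of $C$ to obtain $C' \subseteq \mathbb{F}_{q'}^N$.

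First I would check that $C'$ is still an $(N,K,D)_{q'}$ code: $\phi$ is injective (distinct indices $i\ne j$ give distinct scaled indices $i\Delta \ne j\Delta$), so $C \to C'$ is a bijection preserving the pattern of agreements and disagreements coordinatewise, hence Hamming distances are unchanged and $|C'| = q^K$. Next, for the distance property: take two distinct codewords $c_1', c_2' \in C'$, coming from $c_1 \ne c_2 \in C$. Let $S = \{i \in [N] : c_1[i] \ne c_2[i]\}$; since $C$ has minimum distance $D$, we have $|S| \ge D$. For each $i \in S$, $\psi'(c_1'[i]) - \psi'(c_2'[i]) = \Delta\big(\psi(c_1[i]) - \psi(c_2[i])\big)$ in absolute value, and since $\psi(c_1[i]) \ne \psi(c_2[i])$ are distinct positive integers their difference has absolute value at least $1$, so $|\psi'(c_1'[i]) - \psi'(c_2'[i])| \ge \Delta$. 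Thus $S$ witnesses the $(\Delta, D)$-$\ell_\infty$ property with $d = D$, and one checks the side conditions $\Delta < q \le D$ (the latter needs $D \ge q$, which I would either assume implicitly from the code being nontrivial, or note that the statement uses $d=D$ and the definition only requires $\Delta < d$, so really we just need $\Delta < D$; I would double-check the intended reading of "$\Delta < q$" here — it may be there only to make the scaling $i \mapsto i\Delta$ land inside a field of size $> \Delta q + 1$).

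The one genuine subtlety, and the step I expect to require the most care, is the arithmetic of "embedding into $\mathbb{F}_{q'}$": $\phi$ need not be a field homomorphism or even additive — it is purely a relabeling of alphabet symbols chosen so that the $\psi$-indices are spread out. So I must be careful that nothing in the downstream use (via Theorem~\ref{thm:kddel}) requires linearity of the code $C'$; inspecting the definition of the $(\Delta,d)$-$\ell_\infty$ distance property and Theorem~\ref{thm:kddel}, only the combinatorial distance structure is used, so a non-linear relabeling is fine. The remaining bookkeeping is just verifying that $\max_i \psi'(\phi(\alpha_i)) = q\Delta \le q' - 1$, which is exactly what the hypothesis $q' > \Delta q + 1$ buys us (with room to spare), and that $q'$ being a prime power ensures $\mathbb{F}_{q'}$ exists — a prime power $q'$ in the required range exists by Bertrand-type density of primes, though for the theorem statement we may simply take $q'$ as given. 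I would close by remarking that this construction, combined with Theorem~\ref{thm:kddel} and a good base code (e.g. Reed–Solomon), yields explicit deletion-disjunct matrices à la Kautz–Singleton.
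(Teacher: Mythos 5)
Your proposal is correct and follows essentially the same route as the paper: relabel each symbol $\alpha_i$ by the element of $\mathbb{F}_{q'}$ with $\psi$-index $\Delta\cdot i$, note that this coordinatewise relabeling preserves Hamming distance (so the code stays $(N,K,D)_{q'}$), and observe that on the $\ge D$ coordinates where two codewords differ the scaled indices differ by at least $\Delta$. Your extra remarks (that the map is only a relabeling, not a field homomorphism, and that $q'>\Delta q+1$ guarantees the scaled indices fit) are sound bookkeeping consistent with the paper's argument, including your correct observation that the definition really only needs $\Delta < D$ here.
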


However, the construction in Theorem~\ref{thm:kddel} instantiated using MDS codes does not provide any improvement over the trivial repetition construction shown in Theorem~\ref{thm:del-dist}. We therefore, leave the search for $(N, K, D)_q$ codes with $(\Delta, d)-\ell_\infty$ distance property with better parameters an open problem. 

The summary of our results can be found in the following Table~\ref{tab:results}.
\begin{table}[htbp]
\centering           

\begin{tabularx}{1.0\textwidth} 
{ 
  | >{\raggedright\arraybackslash}p{2.2cm}    
  | >{\raggedright\arraybackslash}X 
  | >{\raggedright\arraybackslash}p{4.3cm} | } 
 \hline
 \textbf{Theorem}  & \textbf{Description} & \textbf{Result} \\
 \hline
  Theorem \ref{thm:del-dist}  & Upper bound for deletion separable matrix  & $m = O(\Delta \cdot k^2\log{n})$  \\
  \hline
  Theorem \ref{thm:del-dist-new} & Lower bound for deletion separable matrix &  $m =\Omega(k^2 \frac{\log n}{\log k}  + k\Delta)$   \\
  \hline
  Theorem \ref{thm:random-del-disj}, \ref{thm:del-disj-dec} & Random construction/ Decoding algorithm for deletion disjunct matrix & $m = O(k^2 \log n + \Delta k)$, decoding time: $O(mn)$\\
  \hline
  Theorem \ref{thm:singleton} & Running time/Number of test for Sub-linear algorithm of deletion disjunct matrix & $m = O(k^2 \log k \log^2 n)$, decoding time: $\textsc{poly}(k, \log n)$\\
  \hline 
  Theorem \ref{thm:kddel},\ref{thm:code} & Number of test for deterministic deletion disjunct matrix & $ m = O(\Delta \cdot k^2 \log n)$\\
\hline
\end{tabularx}\label{tab:results}
\caption{Summary of Theorems and Results}  
\end{table}


\textbf{Organization: }
The randomized construction of deletion disjunct matrices (Theorem~\ref{thm:random-del-disj}), and the associated decoding algorithm (Theorem~\ref{thm:del-disj-dec}) are presented in Section~\ref{sec:del-disj}. 
Section~\ref{sec:disjunct-sublinear} contains the construction of deletion disjunct matrices with sub-linear decoding time algorithm (Theorem~\ref{thm:singleton}). Finally, the deterministic construction (Theorem~\ref{thm:kddel}, Theorem~\ref{thm:code}) is presented in Section~\ref{sec:disjunct-deterministic}. 


Due to the simplicity of the result, we present the trivial construction of deletion separable matrices (Theorem~\ref{thm:del-dist}) and its associated efficient decoder in Section~\ref{sec:del-sep}.

\section{Construction of $(k, \Delta)$ - Deletion Separable Matrix \& Decoding}\label{sec:del-sep}
We now present our construction of the deletion separable matrices. For this, we use disjunct matrices that have been classically used in noiseless group testing. 
Specifically, we replicate a disjunct matrix multiple times to make the outcomes resilient to deletion noise. 

\subsection{Decoding Algorithm}
Let $\tilde{y} = \corr(y, \Delta) \in \{0,1\}^{m-\Delta}$ be the output of the group tests $y = \test{A}{x^*}$ after $\Delta$ deletions, where $A$ is the deletion separable matrix constructed above in Theorem~\ref{thm:del-dist}.

The decoding algorithm to recover the underlying sparse vector $x^*$ from corrupted test outcomes is presented in Algorithm~\ref{alg:cap}. The algorithm proceeds in two steps. In the first step, using a greedy approach, we reconstruct the actual test outcomes, $y$, from the deletion corrupted one, $\ti{y}$ (Algorithm~\ref{alg:greedycomp}). This recovery is possible due to the specific repetition structure of the deletion separable matrix $A$ designed in the previous section. Once we obtain the true test outcomes, we can recover $x^*$ using the decoding algorithm for the disjunct matrix $B$, presented in Algorithm~\ref{alg:disj_dec} for completeness.

The procedure $\GreedyComp$, presented in Algorithm~\ref{alg:greedycomp}, corrects the vector $\tilde{y}$ to recover the actual test outcome vector $y$. A run of symbol $b$ (either $0$ or $1$) in $\tilde{y}$, of length $\ell$, is a substring of length $\ell$ of $\tilde{y}$ that contains only $b$. The procedure extends each run of a particular symbol $b \in \{0,1\}$ of length $\ell$ to a complete multiple of $(\Delta+1)$. We now show that the vector returned by $\GreedyComp$ is $y$ if there are at most $\Delta$ deletions.

\begin{algorithm}[h!]
\begin{flushleft}
    \textbf{Input}: Binary vector $\tilde{y} \in \{0,1\}^{m-\Delta}$, Block size $\Delta+1$
\end{flushleft}
\begin{algorithmic}[1]
\caption{\GreedyComp}\label{alg:greedycomp}
\State{$\hat{y} = ()$}
\ForEach{run of symbol $b \in \{0,1\}$ of length $\alpha(\Delta+1)+\beta$ in $\tilde{y}$, ($\alpha, \beta \in \Z$, $0 \le \beta \le \Delta$)}
    \If{$\beta==0$}
        \State{Append a run of symbol $b$ of length $\alpha(\Delta+1)$ to $\hat{y}$}
    \Else
        \State{Append a run of symbol $b$ of length $(\alpha+1)(\Delta+1)$ to $\hat{y}$}
    \EndIf
\EndFor
\State{Return $\hat{y}$}
\end{algorithmic}
\end{algorithm}

\begin{lemma}\label{lem:greedy-correct}
Let $\hat{y}$ be the output of $\GreedyComp$ on input $\tilde{y}=\corr(y, \Delta)$, then $\hat{y} = y$. 
\end{lemma}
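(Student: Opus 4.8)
The plan is to exploit the rigid block structure of $y$ induced by the repetition construction. First I would record what $y$ looks like: by \eqref{eq:matrix-cons}, $A$ consists of $m'$ groups of $\Delta+1$ identical rows, so $y=\test{A}{x^*}=\bigvee_{i\in\supp{x^*}}A_i$ is a concatenation of $m'$ length-$(\Delta+1)$ blocks, the $j$-th block being constant and equal to $0^{\Delta+1}$ or $1^{\Delta+1}$ according to $(\test{B}{x^*})[j]$. Merging adjacent blocks that carry the same symbol, this says: every maximal run of $y$ has length a positive multiple of $\Delta+1$; in particular every maximal run of $y$ has length at least $\Delta+1$.

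Next I would analyze how at most $\Delta$ deletions can change this run structure. Let the maximal runs of $y$, read left to right, have symbols $b_1,\dots,b_r$ (necessarily alternating) and lengths $L_1,\dots,L_r$, each $L_t$ a multiple of $\Delta+1$. A deletion only removes a coordinate, so it cannot create a new symbol, cannot split a run, and can only shorten the run it falls in; write $d_t\ge 0$ for the number of deletions falling in run $t$, so $\sum_t d_t\le\Delta$. The two things to verify are that no run disappears and that no two runs merge. A run disappears only if all $L_t\ge\Delta+1$ of its coordinates are deleted, impossible with a budget of $\Delta$; and two runs with the same symbol merge only if every run between them (each of length $\ge\Delta+1$) is entirely deleted, again impossible. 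Hence $\tilde y$ has exactly $r$ maximal runs, with symbols $b_1,\dots,b_r$ in the same order, the $t$-th of length $L_t-d_t$, which satisfies $1\le L_t-d_t$ since $d_t\le\Delta<\Delta+1\le L_t$.

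Finally I would check that $\GreedyComp$ restores each run to its original length. Fix run $t$ and write $L_t=c_t(\Delta+1)$ with $c_t\ge 1$, so the corresponding run of $\tilde y$ has length $c_t(\Delta+1)-d_t$ with $0\le d_t\le\Delta$. If $d_t=0$ the length is $c_t(\Delta+1)$, so in the notation of Algorithm~\ref{alg:greedycomp} we have $\alpha=c_t$, $\beta=0$, and the algorithm appends a run of symbol $b_t$ of length $c_t(\Delta+1)=L_t$. If $1\le d_t\le\Delta$, then $c_t(\Delta+1)-d_t=(c_t-1)(\Delta+1)+(\Delta+1-d_t)$ with $1\le\Delta+1-d_t\le\Delta$, so $\alpha=c_t-1$, $\beta=\Delta+1-d_t\ne 0$, and the algorithm appends a run of symbol $b_t$ of length $(\alpha+1)(\Delta+1)=c_t(\Delta+1)=L_t$. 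Either way the $t$-th emitted run equals the $t$-th run of $y$; concatenating over $t=1,\dots,r$ yields $\hat y=y$.

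I expect the main obstacle — indeed essentially the only delicate point — to be the middle step: one must argue that the sequence of run symbols of $\tilde y$ is identical to that of $y$ (nothing vanishes, nothing merges), which is exactly where the hypothesis that each block has length $\Delta+1$, strictly larger than the deletion budget $\Delta$, is used. The per-run arithmetic in the last step is then routine; the only subtlety there is that $\sum_t d_t\le\Delta$ forces $d_t\le\Delta$ for every $t$, which is what makes the two-case split exhaustive and keeps $\beta$ in the range $\{0,\dots,\Delta\}$ as Algorithm~\ref{alg:greedycomp} requires.
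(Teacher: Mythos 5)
Your proof is correct, and it takes a genuinely different route from the paper. The paper argues by induction on the blocks of $A$ (one block per row of $B$): it splits $\tilde y$ into the part coming from the first $t$ blocks and the part coming from block $t+1$, invokes the inductive hypothesis on the prefix, and argues the last block's run is re-extended to length $\Delta+1$. You instead give a direct, global analysis of the maximal runs of $y$: every maximal run has length a positive multiple of $\Delta+1$, a budget of $\Delta$ deletions can neither annihilate a run nor merge two runs, so $\tilde y$ has the same run symbols in the same order, and the rounding-up rule in $\GreedyComp$ restores each run's length exactly (your two-case arithmetic with $\beta=\Delta+1-d_t$ is the right bookkeeping, and $\sum_t d_t\le\Delta$ indeed forces $d_t\le\Delta$ for each $t$). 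What your approach buys is that it works directly with the objects the algorithm actually processes — maximal runs of $\tilde y$, which may span several identical-symbol blocks — whereas the paper's blockwise induction implicitly treats the last block's outcomes as a separate run and is somewhat loose when a run of $y$ crosses a block boundary; your merging of adjacent equal-symbol blocks into maximal runs handles that uniformly. The paper's induction, on the other hand, stays closer to the construction in Equation~\eqref{eq:matrix-cons} and to how the decoder in Algorithm~\ref{alg:cap} later samples one entry per block. Either argument establishes the lemma; yours is the more self-contained of the two.
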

\begin{proof}
    We prove this lemma by induction on the blocks of identical rows in $A$ that have one-one correspondence with individual rows in $B$,
    and $m' = m/(\Delta+1))$. 
    
    \textbf{Base case: $m' =1$:} In this case, note that length of $y$ is $\Delta+1$, and $\tilde{y}$, WLOG is of length $1$. We now show that $\hat{y}[i] = y[i]$ for each $i \in [1,\Delta+1]$. Since the first row of $B$ is repeated $\Delta+1$ times in $A$, the first $\Delta+1$ test outcomes in $y$ are identical. Therefore, even after $\Delta$ deletions, one of the true outcomes with the first row of $B$ is preserved in $\tilde y$. The algorithm, therefore, reconstructs and returns $\hat y = y$. 
    
    \textbf{Induction Hypothesis: } Let us assume the correctness of the algorithm, $\GreedyComp$ for up to the $t$-th block of $A$, i.e., $m' \le t$. 
    
    \textbf{Induction step:} 
    Consider the true test output vector $y$ until the $t+1$-th block of $A$. Let $\tilde y$ be a vector of length $\ell := (t+1)(\Delta+1) - \Delta$ obtained by deleting arbitrary $\Delta$ entries of $y$. 
    Let $\Delta_1$ denote the number of deletions in the first $t(\Delta+1)$ entries of $y$. Define $p := t(\Delta+1) - \Delta_1$. Then, $\tilde{y}_1 := \tilde y [1:p]$ denotes the test outcomes from the first $t$ blocks of $A$, and $\tilde{y}_2 := \tilde y [p+1:\ell]$ are the outcomes from the $t+1$-th block. 
    
    By induction hypothesis, $\hat{y}_1 = \GreedyComp(\tilde{y_1})$ will be correctly computed and we have $\hat{y}_1 = y[1: t(\Delta+1)]$. 
    Also, since all rows in the $t+1$-th block of $A$ are identical, $\tilde y [p+1:\ell] = 0^{\ell-p}$(or, $1^{\ell-p}$). $\GreedyComp$ will extend this run of $\ell - p$ $0$s (or 1s) to a length of $\Delta+1$ elements. So, $\hat{y}_2 = y[t(\Delta+1)+1 : (t+1)(\Delta+1)]$. Combining the two parts, the correctness of $\GreedyComp$ follows. 
\end{proof}

Once the actual test outcome vector $y = \test{A}{x^*}$ is recovered, we can reconstruct $\test{B}{x^*}$ by removing the redundancy. The ground truth defective status $x^*$ can then be obtained using any decoding algorithm for disjunct matrices. These steps are listed in Algorithm~\ref{alg:cap}. 

\begin{algorithm}[h!]
\begin{flushleft}
\textbf{Input}: $k$-disjunct matrix $B \in \{0,1\}^{m' \times n}$, \\
Matrix $A \in \{0,1\}^{m \times n}$ generated from $B$ as described in Equation~\ref{eq:matrix-cons}\\
Corrupted output $\tilde{\f{y}} := \corr(\f{y}, \Delta)$, where $\f{y} = \test{A}{x^*}$. \\ 
\end{flushleft}
\begin{algorithmic}[1]
\caption{Algorithm to recover the set of defectives}\label{alg:cap}
\State{Let $\hat{y} := \GreedyComp(y', \Delta)$}
\Comment{Fix deletions greedily to recover $\test{A}{x^*}$}
\State{Let $\mathbf{y}' = 0^{m'}$}
\ForEach{$j \in [m/(\Delta + 1)]$}
        \State{$\mathbf{y}'[j] = \hat{y}[j(\Delta + 1)]$}
\EndFor 
\Comment{Construct $\test{B}{x^*}$}
\State{$\hat{x} := \DisjDec (B, y')$} \Comment{See Algorithm~\ref{alg:disj_dec}}
\State{return $\hat{\mathbf{x}}$}
\end{algorithmic}
\end{algorithm}

\begin{theorem}
    Let $A \in \{0,1\}^{m \times n}$ be the deletion separable matrix constructed from a $k$-disjunct matrix $B$ as described in Theorem~\ref{thm:del-dist}. Given $\tilde{y} = \corr(\test{A}{x^*}, \Delta)$, Algorithm~\ref{alg:cap} recovers $x^*$ exactly in time $O(mn/\Delta)$.
\end{theorem}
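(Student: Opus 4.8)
The plan is to verify that Algorithm~\ref{alg:cap} is correct and then bound its running time phase by phase. Correctness is almost immediate from the lemmas already established: by Lemma~\ref{lem:greedy-correct}, the call $\hat y := \GreedyComp(\tilde y, \Delta)$ returns exactly $y = \test{A}{x^*}$, since $\tilde y = \corr(y,\Delta)$ and $A$ has the block-repetition structure of Equation~\ref{eq:matrix-cons}. The loop that sets $\mathbf y'[j] = \hat y[j(\Delta+1)]$ then simply reads off one representative entry from each length-$(\Delta+1)$ block of $\hat y$; because the $j$-th block of $A$ consists of $\Delta+1$ identical copies of $B_j$, the entry $\hat y[j(\Delta+1)]$ equals $\bigvee_{i \in \supp{x^*}} B_j[i] = \test{B^j}{x^*}$, so $\mathbf y' = \test{B}{x^*}$ exactly. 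Finally, since $B$ is $k$-disjunct and $\wt(x^*)\le k$, the standard correctness of $\DisjDec$ (Algorithm~\ref{alg:disj_dec}) — every non-defective index appears in some negative test of $B$ — guarantees $\hat x = \supp{x^*}$, which is what we return.

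For the running time, I would account for the three phases separately. The $\GreedyComp$ call processes $\tilde y$ in a single left-to-right pass, grouping it into maximal runs and extending each run to the nearest multiple of $\Delta+1$; this is $O(|\tilde y| + |\hat y|) = O(m)$ time. The block-reduction loop runs $m' = m/(\Delta+1)$ iterations of constant work each, so it costs $O(m/\Delta)$. The dominant term is the final call to $\DisjDec$ on the $m' \times n$ matrix $B$: that algorithm scans, for each of the $m'$ rows, the up-to-$n$ entries that are $1$ and removes them from $\hat x$, for a total of $O(m' n) = O(mn/\Delta)$ time. Summing, the overall running time is $O(m + m/\Delta + mn/\Delta) = O(mn/\Delta)$, as claimed.

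I do not anticipate a serious obstacle here — the theorem is really a bookkeeping statement assembling Lemma~\ref{lem:greedy-correct}, the construction in Theorem~\ref{thm:del-dist}, and the classical guarantee for disjunct-matrix decoding. The one point that needs a little care is making the indexing in the block-reduction step match the construction exactly: Equation~\ref{eq:matrix-cons} places the copies of $B_j$ in rows $[(j-1)(\Delta+1)+1,\, j(\Delta+1)]$, so the last row of block $j$ is row $j(\Delta+1)$, which is precisely the index the algorithm reads; I would state this alignment explicitly so the reader sees that $\hat y[j(\Delta+1)]$ is indeed a copy of $\test{B^j}{x^*}$ rather than a boundary entry of an adjacent block. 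A secondary subtlety is that $\GreedyComp$ as written takes $\tilde y$ (and the block size) as input, whereas Algorithm~\ref{alg:cap} line~1 writes $\GreedyComp(y',\Delta)$ with $y'$ not yet defined; I would note that the intended input is the corrupted vector $\tilde y$, so that Lemma~\ref{lem:greedy-correct} applies verbatim. With those indexing remarks in place, the proof is just the concatenation of the observations above.
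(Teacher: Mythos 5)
Your proof is correct and follows essentially the same route as the paper: correctness via Lemma~\ref{lem:greedy-correct} together with the standard disjunct-matrix decoder for $B$, and a phase-by-phase runtime count $O(m)+O(m/\Delta)+O(m'n)=O(m+mn/\Delta)$. Your added remarks on the block indexing (reading $\hat y[j(\Delta+1)]$ as the last row of block $j$) and on the $\GreedyComp$ input typo are accurate, and your final simplification to $O(mn/\Delta)$ (implicitly taking $\Delta = O(n)$) matches the paper's stated bound.
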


\begin{proof}
    The correctness of Algorithm~\ref{alg:cap} follows from the correctness of Lemma~\ref{lem:greedy-correct} and the decoding algorithm for noiseless group testing. The latter follows from construction since $B$ is a $k$-disjunct matrix. 

    For the run-time analysis, note that Algorithm~\ref{alg:greedycomp} runs in time $O(m)$, and the column matching algorithm takes $O(m'n)$ time. Therefore, Algorithm~\ref{alg:cap} runs in time $O(m + mn/\Delta)$. 
\end{proof}

\section{Deletion disjunct matrix \& decoding}\label{sec:del-disj}
In this section, we will consider random designs to obtain a $(k, \Delta)$-deletion disjunct matrix with high probability. 
\begin{proof}[Proof of Theorem~\ref{thm:random-del-disj}]
Let $A \in \{0,1\}^{m \times n}$ be a random Bernoulli matrix where each entry is set to $1$ independently with some probability $p > 0$. We show that for an appropriate value of $p$, and large enough $m$, $A$ will be a $(k, \Delta)$ - deletion disjunct matrix with high probability. 

Recall that for any two vectors $v_1, v_2 \in \{0,1\}^{m-\Delta}$, we say that a $1-0$ match occurs if there exists an index $i$ such that $v_1[i] = 1$, but $v_2[i] = 0$. 
For a fixed set of $k+1$ columns indexed by $T \subset [n]$, $|T| =k$, and $j \in [n] \setminus T$, and any two sets of $\Delta$ deletions indexed by $S_1, S_2 \subseteq [m]$, $|S_1| = |S_2|= \Delta$, define a bad event $E(T, j, S_1, S_2)$ to occur if there is no $1-0$ match between $(A_j)_{\overline{S_2}}$ and $(\vee_{i \in T} A_i)_{\overline{S_1}}$. 
Note that $A$ is not $(k, \Delta)$-deletion disjunct matrix if $E(T, j, S_1, S_2)$ occurs for any set $T, j, S_1, S_2$ since then, $d_{adel}(A_j, \vee_{i \in T} A_i) < \Delta$. 
We will now bound the probability of such bad events from occurring. 

Observe from the independence of the matrix entries that for a fixed row, $w \in [m-\Delta]$, the probability that a $1-0$ match does not occur is given by $(1-(1-p)^k\cdot p)$. Therefore, for a fixed set of $T, j, S_1, S_2$, the event $E(T, j, S_1, S_2)$ occurs with probability $(1-(1-p)^k\cdot p)^{m-\Delta}$. 

Taking a union bound over all possible $\binom{n}{k} (n-k)$ choices of $T, j$ and $\binom{m}{\Delta}^2$ choices of $S_1, S_2$, we get that the failure probability is 
\[(1-(1-p)^k\cdot p)^{m-\Delta}{m \choose \Delta}^2 \binom{n}{k} (n-k).\]
Setting $p=1/k$, 
we get that the probability that $A$ is not a $(k, \Delta)$-deletion disjunct matrix is
\begin{align*}
    p_{\text{fail}} &:=(1-(1-p)^k\cdot p)^{m-\Delta}{m \choose \Delta}^2 \binom{n}{k} (n-k) \\
    &\le \left(1-(1-\frac1k)^k \frac 1k\right)^{m-\Delta} \left(\frac{em}{\Delta}\right)^{2\Delta} \left(\frac{en}{k}\right)^k (n-k) \quad (\text{Since, }\binom{m}{\Delta} \le (em/\Delta)^\Delta)\\
    &\le exp\left(-\frac{(m-\Delta)}{10k} + 2\Delta \log \frac{em}{\Delta} + k \log \frac{en}{k} + \log (n-k) \right), 
\end{align*}
where the last inequality follows from using the standard approximations for $e^{-2x} \le (1-x) \le e^{-x}$ for any $x \in (0,1/2)$. Therefore, if we set $m = \tilde{O}\left(k^2 \cdot \log{n} + \Delta \cdot k\right)$, where $ \tilde{O}(\cdot)$ hides $\log \log n$ and $\log k$ terms will ensure that with probability $1-1/n$, $A$ will be $(k,\Delta)$-deleltion disjunct. 
\end{proof}

\subsection{Decoding Algorithm}
In this section, we will give a decoding algorithm to recover $x^*$ from $\ti{y} = \corr(y, \Delta)$ using a $(k, \Delta)$-deletion disjunct matrix that runs in time $O(mn)$. We first present a simple algorithm that has a slightly degraded running time, which we improve later. 

Consider Algorithm~\ref{alg:del-disj}. We show that Algorithm~\ref{alg:del-disj} correctly decodes $x^*$ from $\tilde{y}$, but runs in time $O(nm^{\Delta+1})$. 
\begin{algorithm}[h!]
\begin{flushleft}
\textbf{Input}: $\mathbf{A} \in \{0,1\}^{m \times n}$ --  $(k, \Delta)$-deletion disjunct matrix \\
$\ti{\mathbf{y}} = \corr(\test{A}{x^*}, \Delta) \in \{0,1\}^{(m-\Delta)}$\\
\end{flushleft}
\begin{algorithmic}[1]
\caption{Decode with Deletion Disjunct Matrix}\label{alg:del-disj}
\State {Let $\hat{x} = 1^n$}
\ForEach{$j \in [n]$} 
    \If {$\forall T \subseteq [m]$ with $|T|=\Delta - 1$, $\exists i \in [m] \setminus T$  such that $(A_j)_{\overline{T}} [i] = 1$ and $\ti{y}_i = 0$}
        \State {Set $\hat{x}[j] = 0$}
    \EndIf 
\EndFor
\State {Return $\hat{x}$}
\end{algorithmic}
\end{algorithm}

\begin{lemma}\label{lem:alg1correct}
    Let $A \in \{0,1\}^{m\times n}$ be a $(k,\Delta)$-deletion disjunct matrix. Then, Algorithm~\ref{alg:del-disj} runs in time $O(nm^{\Delta+1})$ and recovers $x^*$ given $\ti{y} = \corr(\test{A}{x^*}, \Delta)$.
\end{lemma}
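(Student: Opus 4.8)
The plan is to prove the two halves of the claim separately: first that Algorithm~\ref{alg:del-disj} outputs $\hat x = x^*$, and then the running-time bound. For correctness I would do a case analysis on each column $j\in[n]$, according to whether $j\in S^*:=\supp{x^*}$ or not. Throughout, write $y=\test{A}{x^*}=\bigvee_{i\in S^*}A_i$, and fix the set $D\subseteq[m]$ of coordinates the adversary deleted, so $\ti y = y_{\overline D}$ with $|D|\le\Delta$; if $|\supp{x^*}|<k$ I would first enlarge $S^*$ to an arbitrary $k$-subset, which only increases $y$ coordinatewise and hence only makes the test harder to pass, so this reduction is harmless.

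\emph{Soundness (a true defective is never discarded).} If $j\in S^*$ then $A_j\le y$ coordinatewise, and deleting the coordinates of $D$ from both sides preserves the domination: $(A_j)_{\overline D}\le y_{\overline D}=\ti y$. Consequently, choosing the guessed deletion pattern $T$ to be (a subset of size $\Delta-1$ of) the genuine pattern $D$, there is no index $i$ with $(A_j)_{\overline T}[i]=1$ and $\ti y_i=0$ under the induced alignment; hence the universally quantified condition in the \textbf{if} of Algorithm~\ref{alg:del-disj} fails for this particular $T$, so $\hat x[j]$ stays $1$. This direction is elementary.

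\emph{Completeness (a non-defective is always discarded).} If $j\notin S^*$, then $\{j\}\cup S^*$ is a set of at most $k+1$ columns, so the $(k,\Delta)$-deletion-disjunct property gives $d_{adel}\!\left(A_j,\bigvee_{i\in S^*}A_i\right)=d_{adel}(A_j,y)\ge\Delta$. By the definition of asymmetric deletion distance, \emph{every} pair of $(m-\Delta)$-length subsequences $x'$ of $A_j$ and $y'$ of $y$ has a position with $x'[i]=1$ and $y'[i]=0$. Now $\ti y$ is itself such a subsequence of $y$, and for any deletion pattern $T$ the algorithm considers, the restriction $(A_j)_{\overline T}$ is (up to dropping one further coordinate) an $(m-\Delta)$-subsequence of $A_j$; invoking the property for this pair produces the required $1$-$0$ match. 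Hence the \textbf{if} condition holds for \emph{all} $T$, and the algorithm sets $\hat x[j]=0$. Combining the two cases gives $\hat x = x^*$. For the runtime, the outer loop runs over $n$ columns; for each column the algorithm enumerates all $\binom{m}{\Delta-1}=O(m^{\Delta-1})$ subsets $T$, and for each one forms $(A_j)_{\overline T}$ and scans for a $1$-$0$ match; bounding the work per subset crudely by $O(m^2)$ yields the stated $O(n\,m^{\Delta+1})$ bound.

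\emph{Main obstacle.} The delicate step is the completeness direction: one must argue that \emph{no} choice of coordinates deleted from $A_j$ can make it "look dominated" by $\ti y$. This is exactly what the asymmetric-deletion-distance formulation of $(k,\Delta)$-deletion disjunctness is engineered to supply, but the argument must carefully line up "the adversary produced one specific subsequence $\ti y$ of $y$" with "the definition quantifies over \emph{all} subsequence pairs of $A_j$ and $y$," and must absorb the off-by-one between the $\Delta-1$ coordinates the algorithm guesses and the $\Delta$ genuine deletions in $\ti y$ (equivalently, fixing the precise alignment convention used when $(A_j)_{\overline T}$ and $\ti y$ have slightly different lengths). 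Reducing the case $|\supp{x^*}|<k$ to $|\supp{x^*}|=k$ is a minor routine point.
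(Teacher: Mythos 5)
Your overall route is the same as the paper's: a case analysis on $j\in\supp{x^*}$ versus $j\notin\supp{x^*}$, with soundness witnessed by the true deletion pattern and completeness read off from the definition of asymmetric deletion distance applied to the $k+1$ columns $\{j\}\cup\supp{x^*}$, plus the same $n\cdot\binom{m}{\Delta}\cdot O(m)$ runtime count. Your extra padding step for $|\supp{x^*}|<k$ (enlarge to a $k$-subset and use that coordinatewise domination only removes $1$--$0$ matches) is a small point the paper glosses over, and it is correct.

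The one step that does not go through as you wrote it is soundness under the literal ``$|T|=\Delta-1$'' reading of Line~3. The claim that choosing $T$ to be a size-$(\Delta-1)$ subset of the genuine deletion set $D$ yields no $1$--$0$ match is false at the level of vectors: take $\Delta=1$, $A_j=y=(0,1,0)^T$ (so $A_j\le y$, as for a defective), and let the adversary delete position $2$, so $\ti y=(0,0)^T$; the only admissible guess is $T=\emptyset$, and under the positional comparison position $2$ is a $1$--$0$ match, so the defective $j$ would be discarded. No ``induced alignment'' available to the algorithm (which sees only $T$ and $\ti y$, not $D$) repairs this, and indeed the comparison ``$(A_j)_{\overline{T}}[i]=1$ and $\ti y_i=0$'' is only well-typed when both strings have length $m-\Delta$. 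The paper's proof implicitly resolves the issue by taking the guess sets $T$ to have size $\Delta$ (the ``$\Delta-1$'' in the pseudocode is evidently a typo): for a defective $j$, the choice $T=T^*$, the true deletion set, gives $(A_j)_{\overline{T^*}}\le y_{\overline{T^*}}=\ti y$ with matching lengths, hence no $1$--$0$ match and the \textsc{If} fails; for $j\notin\supp{x^*}$, $d_{adel}(A_j,y)\ge\Delta$ applies verbatim to the pair $\bigl((A_j)_{\overline{T}},\,y_{\overline{T^*}}\bigr)$ for every size-$\Delta$ set $T$, so the \textsc{If} holds for all $T$. So your ``main obstacle'' paragraph correctly locates the off-by-one, but the fix is not a length-mismatch alignment convention; it is to guess $\Delta$ deletions, after which your argument coincides with the paper's.
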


\begin{proof}[Proof of Lemma~\ref{lem:alg1correct}]
The proof of Lemma~\ref{lem:alg1correct} follows form the deletion disjunct property and can be considered as a generalization of Algorithm~\ref{alg:disj_dec}.  
The algorithm relies on the observation that for $S = \supp{x^*}$, and any $j \in [n]\setminus S$, $d_{adel}(A_j, \vee_{i\in S} A_i) \ge \Delta$. The observation follows directly from the definition of $(k,\Delta)$-deletion disjunct matrix. 
Therefore, the algorithm iteratively checks and rejects those columns of $A$ whose asymmetric deletion distance from $y (=\test{A}{x^*} = \vee_{i\in S} A_i)$ is larger than $\Delta$. For the check, the algorithm does a brute-force search over all possible deletions $T$ of size $\Delta$ to find one set of deletions that has no $1-0$ match with $\ti{y}$. The non-existence of $1-0$ some $T$-deletions gives a certificate that $ j \in \supp{x}$.


To prove the correctness of the algorithm, we show both directions -- 1) If $i \in \supp{x^*}$, then $\hat{x}_i = 1$, and 2) If $i \notin \supp{x^*}$, then $\hat{x}_i = 0$

\paragraph{Case 1: ($i \in \supp{x^*}$):} 
For each $i\in \supp{x^*}$, we show that there always exists some set of deletions indexed by $T$ such that there is no $1-0$ match between $(A_i)_{\overline{T}}$, and $\ti{y}$. Therefore, the condition within the \textsc{If} statement (Line 3 of Algorithm~\ref{alg:del-disj}) is not satisfied for all subsets of $\Delta$ deletions, and $\hat{x}_i$ remains $1$ as initialized. 

Let $T^*$ denote the true set of $\Delta$ deletions that produce $\ti{y}$ from $y$, i.e., $\ti{y} = y_{\overline{T}^*}$. Since $y = \vee_{j \in \supp{x^*}} A_j$, we know that there is no $1-0$ match between $A_i$ and $y$, for any $i \in \supp{x^*}$. Therefore, for the true set of deletions, $T^*$, there will not exist any $1-0$ match between $(A_i)_{\overline{T}^*}$ and $y_{\overline{T}^*} = \tilde{y}$.

\paragraph{Case 2: ($i \notin \supp{x^*}$):} The second part of the proof follows from the definition of a $(k,\Delta)$ deletion disjunct matrix. Since $A$ is $(k,\Delta)$-deletion disjunct matrix, for every $i \notin \supp{x^*}$, we know that $d_{adel}(A_i, y) \ge \Delta$. Therefore, by definition, for any set of $\Delta$ deletions $T_1, T_2$, there will be a $1-0$ match between $(A_i)_{\overline{T}_1}$ and $y_{\overline{T}_2}$. In particular, for $T_2 = T^*$, the true set of deletions that produce $\ti{y}$, there will be a $1-0$ match between $(A_i)_{\overline{T}_1}$ and $\ti{y}$ for every $T_1$. Therefore, the condition in $\textsc{If}$ statement is satisfied, and $\hat{x}_i$ is set to $0$ for every $i \notin \supp{x^*}$.

For the run-time analysis, note that the condition in the $\textsc{If}$ statement checks every possible set of $\Delta$ deletions for the existence of a $1-0$ match with $\ti{y}$. This takes $O(\binom{m}{\Delta} (m-\Delta))$. The Algorithm further iterates over all $j \in [n]$. Therefore, Algorithm~\ref{alg:del-disj} runs in time $O(nm^{\Delta+1})$.
\end{proof}

We now address the main bottleneck in Algorithm~\ref{alg:del-disj} that stems from the search for the certificate for containment in Line 3 of the algorithm. We provide an alternate mechanism to speed up this search process in time $O(m)$ instead of $O(m^{\Delta+1})$.  

\begin{algorithm}[h!]
\caption{\CheckCov}
\label{alg:coverage}
\begin{algorithmic}[1]
\Require Binary vectors $\mathbf{y}$, $\mathbf{z}$, and integer $t$
\Ensure \texttt{True} if there exists a vector $\mathbf{x}$ (obtained from $\mathbf{z}$ by deleting at most $t$ symbols) such that $\mathbf{x}$ is covered by $\mathbf{y}$, otherwise \texttt{False}.
\State $i \gets 0$
\State $j \gets 0$
\While{$i < \text{length}(\mathbf{y})$ and $j < \text{length}(\mathbf{z})$}
    \If{$\mathbf{y}[i] \geq \mathbf{z}[j]$}
        \State $i \gets i + 1$
        \State $j \gets j + 1$
    \Else
        \State $j \gets j + 1$
        \State $t \gets t - 1$
    \EndIf
    \If{$t < 0$}
        \State \Return \texttt{False}
    \EndIf
\EndWhile
\If{$i == \text{length}(\mathbf{y})$}
    \State \Return \texttt{True}
\EndIf
\State \Return \texttt{False}
\end{algorithmic}
\end{algorithm}

\begin{lemma}\label{lem:alg2correct}
    Given two binary vectors $y \in \{0,1\}^{m-t}$, $z \in \{0,1\}^{m}$, Algorithm~\ref{alg:coverage} decides in time $O(m)$ if there exist some set of $t$ deletions in $z$ such that $x \le y$, where $x \in \{0,1\}^{m-t}$ is the substring of $z$ obtained after deletions. 
\end{lemma}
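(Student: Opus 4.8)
The plan is to prove correctness and running time of \CheckCov{} (Algorithm~\ref{alg:coverage}) by analyzing it as a greedy two-pointer matching procedure. The key observation is that deciding whether some $t$-deletion of $z$ yields a string covered by $y$ is equivalent to asking whether $z$ can be embedded into $y$ as a subsequence \emph{in the covering sense}: we must match all $m-t$ retained symbols of $z$ to positions of $y$ (in order, using each position of $y$ at most once, and consuming positions of $y$ left to right without gaps since $|x| = |y| = m-t$), such that whenever $z[j]$ is retained and matched to $y[i]$ we have $z[j] \le y[i]$; the $t$ unmatched symbols of $z$ are exactly the deletions. Since $x$ and $y$ have the same length, ``$x$ is a subsequence of $y$ with $x\le y$ coordinatewise'' forces every symbol of $y$ to be used, so the matching is really: scan $y$ left to right with pointer $i$, scan $z$ left to right with pointer $j$, and at each step either match $z[j]$ to $y[i]$ (when $y[i]\ge z[j]$, advancing both) or delete $z[j]$ (advancing only $j$ and decrementing the deletion budget $t$). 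The algorithm returns \texttt{True} iff $i$ reaches the end of $y$ before the budget is exhausted.

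The correctness argument has two directions. For soundness (if the algorithm returns \texttt{True}, a valid $t$-deletion exists), the run of the algorithm itself exhibits an explicit set of at most $t$ deleted positions in $z$ together with an order-preserving injection of the retained symbols into all of $y$ respecting $\le$; this directly certifies that $x = z$ restricted to the retained positions satisfies $x\le y$. For completeness (if a valid $t$-deletion exists, the algorithm returns \texttt{True}), I would use a standard exchange/greedy-stays-ahead argument: suppose there is some deletion set $D$ with $|D|\le t$ and an embedding $\phi$ of the retained symbols of $z$ into $y$ with $z[j]\le y[\phi(j)]$; I claim the greedy algorithm never deletes more symbols than the optimal solution up to any prefix. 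Concretely, after the algorithm has processed $z[1{:}j]$, let $d_{\mathrm{alg}}(j)$ be the number of deletions it has made and $i_{\mathrm{alg}}(j)$ the current $y$-pointer; one shows by induction on $j$ that $d_{\mathrm{alg}}(j) \le |D \cap [1,j]|$ and $i_{\mathrm{alg}}(j) \le \phi(j')$ where $j'$ is the largest retained index $\le j$ (greedy advances the $y$-pointer as slowly as possible). The crucial inductive step: if $z[j]\notin D$ but greedy chooses to delete $z[j]$, that can only happen when $y[i_{\mathrm{alg}}] < z[j] = 1$, i.e. $y[i_{\mathrm{alg}}]=0$; but since $i_{\mathrm{alg}} \le \phi(j)$ and all positions of $y$ before $\phi(j)$ were matched by earlier retained symbols of $z$ in the optimal solution, a counting argument on matched-versus-retained symbols forces greedy to have ``fallen behind'' only by having made at least one extra deletion earlier — contradicting the prefix bound, or else directly bounding $d_{\mathrm{alg}}(j) \le |D\cap[1,j]| \le t$. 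I would write this cleanly by maintaining the invariant that greedy's multiset of deletions is, prefix by prefix, no larger than any optimal one.

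I expect the main obstacle to be the completeness direction, specifically making the greedy-stays-ahead invariant precise: one has to argue that whenever greedy is ``forced'' to delete a $1$ in $z$ because the current $y$-symbol is $0$, an optimal solution is also forced to delete \emph{some} symbol of $z$ it has not yet accounted for, so that greedy's deletion count stays dominated. The subtlety is that the optimal solution might have matched that particular $z[j]$ to a later $y$-position; the argument must show greedy's slower $y$-pointer and the no-gaps constraint ($|x|=|y|$, so $y$ is consumed contiguously) still allow a charging of greedy's extra deletion to a distinct optimal deletion. Once the invariant ``$d_{\mathrm{alg}}(j)\le d_{\mathrm{OPT}}(j)$ and $i_{\mathrm{alg}}(j)$ is minimal'' is established, both the return value and the running time follow immediately: the while loop increments $i+j$ by at least one each iteration and each iteration is $O(1)$ work, giving $O(|y|+|z|) = O(m)$ total time, which is exactly the claimed bound.
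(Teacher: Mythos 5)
Your route — a two-pointer greedy analyzed via a prefix-domination (``greedy stays ahead'') invariant — is a legitimate way to prove the lemma and genuinely differs from the paper's proof, which is an exchange argument: it compares the witness's deletion positions $s_1,\ldots,s_t$ with the algorithm's $r_1,\ldots,r_t$, takes the first index $\ell$ where they disagree, rules out $r_\ell<s_\ell$ (the violation at $r_\ell$ would survive in the witness, contradicting coverage), and for $r_\ell>s_\ell$ shifts the witness's deletions in $[s_\ell,r_\ell)$ past $r_\ell$ without creating violations, then inducts on $\ell$. Your prefix invariant, once stated correctly, yields the same conclusion and has the advantage of also showing directly that the deletion counter never goes negative at any intermediate point.

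However, as written your invariant is backwards and internally inconsistent, and the sketch of the crucial step leans on the wrong direction. Since every processed symbol of $z$ is either matched or deleted, $i_{\mathrm{alg}}(j)=j-d_{\mathrm{alg}}(j)$; and because $|x|=|y|$ forces the witness's covering embedding to be rank-preserving and gap-free, the number of $y$-positions consumed by the witness up to $j$ is $j-|D\cap[1,j]|$. Hence $d_{\mathrm{alg}}(j)\le|D\cap[1,j]|$ is \emph{equivalent} to $i_{\mathrm{alg}}(j)\ge j-|D\cap[1,j]|$: the greedy matches whenever it can, so its $y$-pointer is maximal, not minimal, and the claim ``$i_{\mathrm{alg}}(j)\le\phi(j')$, greedy advances the $y$-pointer as slowly as possible'' contradicts the first half of your own invariant. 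The correct version of the crucial step is the mirror image of what you wrote and is in fact short: if the witness retains $z[j]=1$ but greedy deletes it, the $y$-coordinate greedy compares $z[j]$ against is the $(j-d_{\mathrm{alg}}(j-1))$-th, which is $\ge \phi(j)=j-|D\cap[1,j-1]|$ by the induction hypothesis and carries a $0$, whereas $y[\phi(j)]=1$; so the two positions differ, forcing the strict inequality $d_{\mathrm{alg}}(j-1)<|D\cap[1,j-1]|$, and the new deletion still gives $d_{\mathrm{alg}}(j)\le|D\cap[1,j]|\le t$ (the other cases are immediate). You should also close the endgame, which your sketch glosses over: the invariant shows the budget is never exceeded, and if the loop ended with $j=m$ but $i<m-t$ then $d_{\mathrm{alg}}(m)=m-i>t\ge|D|$, a contradiction, so the algorithm reaches $i=m-t$ and returns \texttt{True}. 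With these repairs your argument is complete; the $O(m)$ running-time claim is fine as you state it.
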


\begin{proof}[Proof of Lemma~\ref{lem:alg2correct}]
The algorithm, at a high level, iteratively looks for a violation for coverage, i.e., an index $i$ such that $z_i \ge y_i$. Every time it encounters a violation, the algorithm will delete that index from $z$. Observe that only the $1$ entries are deleted from $z$. The algorithm returns a $\texttt{False}$ if it finds more than $t$ violations. 

To prove the correctness of Algorithm ~\ref{alg:coverage}, we have to show that the algorithm accepts if and only if there exists a $x$, obtained from $z$ after at most $t$ deletions that can be covered by $y$.

One side of the statement follows trivially. Note that if the algorithm accepts, then there exists an $x$ that can be covered by $y$ since the algorithm constructs one such vector from $z$ with at most $t$ deletions.

The other side of the proof is a little more intricate. We must show that if there exists a $x$ that can be covered by $y$, then the algorithm returns \texttt{True}.  

From the hypothesis, we know that there exists an $x$ that can be obtained from $z$ after $t$ deletions such that $y \ge x$. Let $s_1, s_2, \ldots, s_t$ be the indices of deletions in $z$ that produces $x$.  Let $r_1,\ldots,r_t$ be the indices of deletions in $z$ made by the algorithm to obtain $x'$. Let us denote by $z^{(i)}$ the snapshot of $z$ after deletions $s_1, \ldots s_i$. Analogously, let $z'^{(i)}$ be the state of $z$ after deletions $r_1, \ldots, r_i$. 

Let $\ell \in [t]$ be the first position of mismatch in the deletion sets, i.e., $r_\ell \neq s_\ell$, and $r_i = s_i$ for all $i < \ell$. Therefore, for the first $\ell-1$ deletions, both the snapshot of $z$, $z^{(i)}$ and $z'^{(i)}$ looks exactly the same for all $i < \ell$.  Now consider the following two cases:

1) $r_\ell < s_\ell$:  Since Algorithm~$\ref{alg:coverage}$ deletes an element only when it detects a violation (i.e., $z'_{r_\ell} = 1$ and corresponding $y_{r_\ell - \ell + 1} = 0$), the deletions $s_{\ell}, \ldots, s_{t}$ will fail to remove this violation if $s_\ell > r_\ell$. Therefore, $x = z^{(t)}$ is not covered by $y$, thereby contradicting our hypothesis. 

2) $r_\ell > s_\ell$: 
Consider the snapshots $z^{(\ell-1)}$ and $z'^{(\ell-1)}$ of $z$ that look exactly the same before the $\ell$-th deletion. Since the algorithm does not find any violations before $z_{r_\ell}$, the sub-vector $z[s_{\ell}: r_\ell]$ satisfies $z[s_{\ell},r_{\ell}] \leq y[s_{\ell}-\ell+1 , r_{\ell} - \ell + 1]$. Therefore, all the deletions between $s_{\ell}$ and $r_{\ell}$ in $z^{(\ell-1)}$ can be moved after $r_{\ell}$ without altering any violations. This gives us an equivalent $z$ with $r_\ell = s_\ell$.

The proof can then be completed inductively.
\end{proof}

Equipped with both Lemma~\ref{lem:alg1correct} and Lemma~\ref{lem:alg2correct}, we now prove Theorem~\ref{thm:del-disj-dec}. 

\begin{proof}[Proof of Theorem~\ref{thm:del-disj-dec}]
The proof of Theorem~\ref{thm:del-disj-dec} follows from the correctness of Algorithm~\ref{alg:del-disj} and Algorithm~\ref{alg:coverage} shown in Lemma~\ref{lem:alg1correct} and Lemma~\ref{lem:alg2correct} respectively. 

Observe that if the asymmetric deletion distance between two vectors $x, y$ is less than $\Delta$, then there exist some $\Delta$ deletions in both $x$ and $y$ such that the substrings $x'$ and $y'$ of $x$ and $y$ respectively satisfy $x' \le y'$. This condition is equivalent to the non-existence of a $1-0$ match between $(x',y')$. Algorithm~\ref{alg:coverage} checks for this exact condition. 

In particular, for a fixed $y' =\tilde{y}$, and $t=\Delta$, Algorithm~\ref{alg:coverage} can decide efficiently if there are some $\Delta$ deletions in a column $A_j$ of $A$ such that $A_j' \le \tilde{y}$, where $A_j'$ is the vector obtained by deleting some $\Delta$ entries of $A$. Thereby, efficiently searching for a certificate for $i \in \supp{x^*}$. 

Algorithm~\ref{alg:del-disj} can therefore be equivalently described as follows. 
\begin{algorithm}[h!]
\begin{flushleft}
\textbf{Input}: $\mathbf{A} \in \{0,1\}^{m \times n}$ --  $(k, \Delta)$-deletion disjunct matrix \\
$\ti{\mathbf{y}} = \corr(\test{A}{x^*}, \Delta) \in \{0,1\}^{(m-\Delta)}$\\
\end{flushleft}
\begin{algorithmic}[1]
\caption{Decode with Deletion Disjunct Matrix}\label{alg:del-disj-new}
\State {Let $\hat{x} = 1^n$}
\ForEach{$j \in [n]$} 
    \If{$\CheckCov(\tilde{y},A_j,\Delta)==\texttt{False}$}
        \State {Set $\hat{x}[j] = 0$}
    \EndIf 
\EndFor
\State {Return $\hat{x}$}
\end{algorithmic}
\end{algorithm}
Since Algorithm~\ref{alg:coverage} runs in time $O(m)$, we can guarantee that Algorithm~\ref{alg:del-disj-new} terminates in $O(mn)$ time. 
\end{proof}

\section{Deletion Disjunct Matrix with Sublinear Time Decoding}\label{sec:disjunct-sublinear}

The construction of testing matrix $A$ is slightly similar to the one used in \cite{SAFFRON, guruswami2023noise} that are known to be robust to random bit-flip noise. Using certain coding theoretic tools, we provide constructions of testing matrices that are robust to adversarial deletions as well. We remark that similar construction can also be used to achieve robustness against a small number of adversarial bit-flips and random deletions.

    First, we will review the construction of \cite{SAFFRON} to lay the foundations for the current one. The construction involves the following two main components: 
    \begin{enumerate}
        \item A random bipartite graph  $G = (L, R, E)$ with $n$ left vertices, and $M$ right vertices that dictates the tests in which each item participates. 
        \item A signature matrix $U \in \{0,1\}^{h \times n}$ that assigns a unique binary signature of length $h$ to each item in the population. 
    \end{enumerate}

    The construction assigns each left vertex $v_i \in L$ with a signature $U_i \in \{0,1\}^h$. Each right vertex $z_j \in R$ constructs a $h \times n$ binary matrix $Z^{(j)}$ by setting its $i$-th column as
    $$Z^{(j)}_i = \begin{cases}
        U_i \text{ if } (i,j) \in E\\
        \overline{0} \text{ otherwise. }
    \end{cases} .$$ 

    The testing matrix $A \in \{0,1\}^{Mh \times n}$ is obtained by stacking each of the $Z^{(j)}$ submatrices. Therefore, 
    $A = [Z^{(1) T}, \ldots, Z^{(M) T}]^T$. 

Consider the following example of a testing matrix $A$ constructed for $3$-items using a bipartite graph with adjacency matrix given by $A_{\mathcal{G}}$ and signature matrix $U$. 
\[
A_{\mathcal{G}} = \begin{bmatrix}
1 & 1 & 1 \\
0 & 1 & 1 
\end{bmatrix} 
\qquad
U = \begin{bmatrix}
 0 & 1 & 1\\
 1 & 0 & 1\\
\end{bmatrix}
\qquad
A = \begin{bmatrix}
 0 & 1 & 1\\
 1 & 0 & 1\\
\hline
0 & 1 & 1 \\
0 & 0 & 1 \\
\end{bmatrix}
\]
The corresponding graph and test outcomes are shown in Figure~\ref{fig:testing}.
\begin{figure}[h!]
    \centering
    \includegraphics[width=0.3\linewidth]{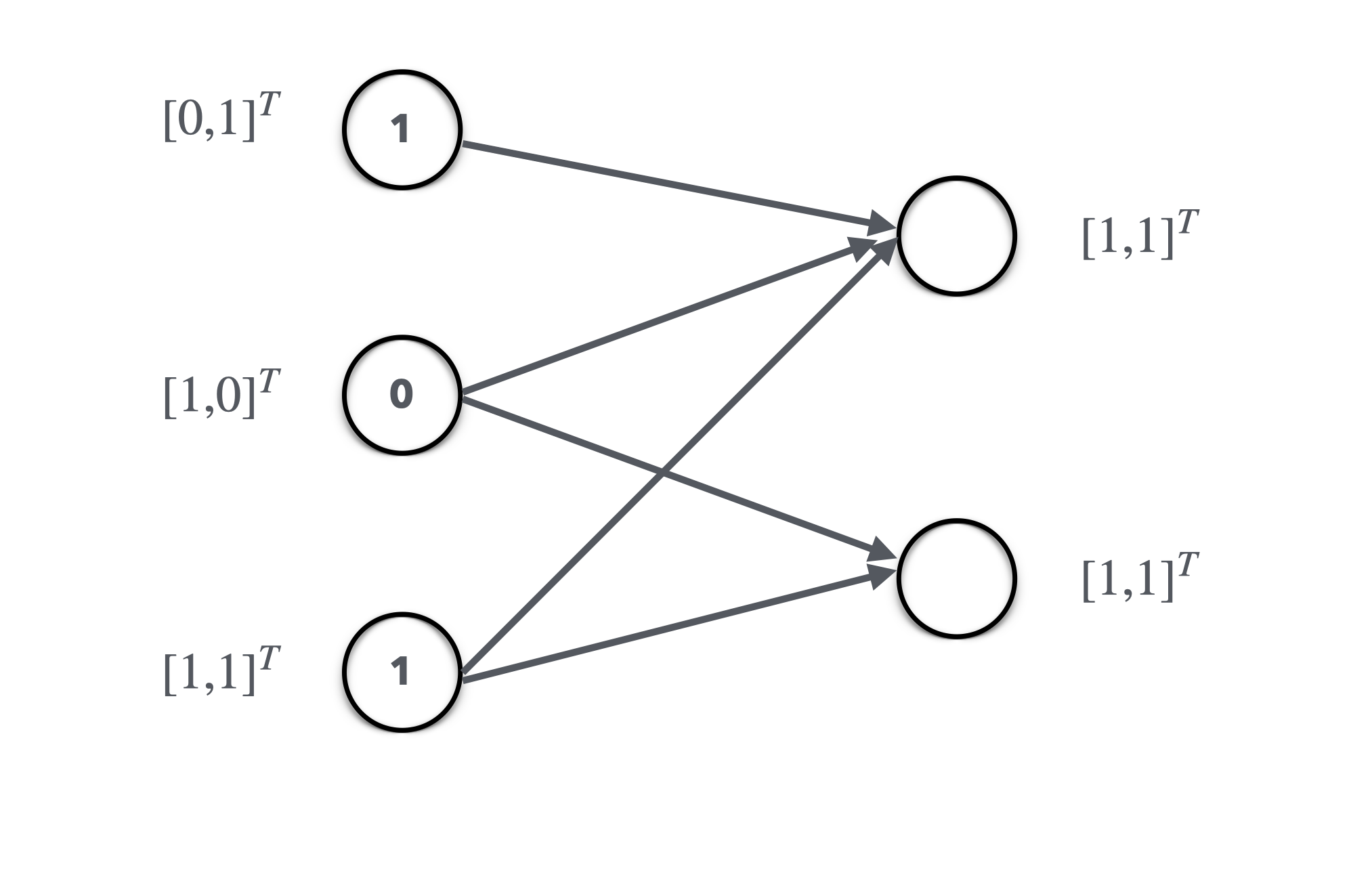}
    \caption{Testing Matrix Construction}
    \label{fig:testing}
\end{figure}
%

Note that the output of testing (in the noiseless setting) $\test{A}{x^*}$ can be partitioned into $M$ blocks, $y = (y_1^T, \ldots, y_M^T)^T$, where each block $y_j \in \{0,1\}^h$ corresponds to $\test{Z^{(j)}}{x^*}$, and hence is given as 
\[
y_j = \bigvee_{i \in \supp{x^*}} Z^{(j)}_i = \bigvee_{i \in \supp{x^*} \cap (i,j)\in E} U_i,
\]
where, the latter equality follows from the construction of $Z^{(j)}$.

The decoding algorithms of \cite{SAFFRON} relies on the following theorem that shows that each defective item $i$ appears by itself with high probability on at least one right node. Such a right node is called a singleton node. 
\begin{definition}[Singleton Node]
    A right node that is connected to one and only one defective item is called a singleton.
\end{definition}

In particular, the authors show the following lemma which we extend for any set of $k$ defectives using a union bound. 
\begin{lemma}[Thm 4.2 of \cite{SAFFRON}]
    Let $G = (L, R, E)$ be a random bipartite graph with $M=e(k^2\log{n} + k(1+\alpha)\log{k})$ right nodes, where each left node is connected to a certain right node with probability $p = \frac{1}{k}$ independently. Let $S^* \subseteq L$ be any set of $k$ left vertices. Then each $i \in S^*$ is connected to at least one singleton node with probability at least $1-k^{-\alpha}$. 
\end{lemma}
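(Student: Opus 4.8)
The plan is to prove the singleton property for a single defective item and then amplify to all $k$ defectives by a union bound. Fix a defective item $i \in S^*$. A right node $z_j$ is a singleton for $i$ precisely when $(i,j) \in E$ but $(i',j) \notin E$ for every other defective $i' \in S^* \setminus \{i\}$. Since each edge $(i',j)$ is present independently with probability $p = 1/k$, the probability that a fixed right node $z_j$ is a singleton for $i$ is exactly $p(1-p)^{k-1} = \frac1k \left(1-\frac1k\right)^{k-1}$, which is at least $\frac{1}{ek}$ using the standard bound $(1-1/k)^{k-1} \ge 1/e$. These events are independent across the $M$ right nodes, so the probability that $i$ is connected to \emph{no} singleton node is at most $\left(1 - \frac{1}{ek}\right)^M \le \exp\!\left(-\frac{M}{ek}\right)$.

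Next I would plug in $M = e(k^2 \log n + k(1+\alpha)\log k)$. This gives $\exp\!\left(-\frac{M}{ek}\right) = \exp\!\left(-(k\log n + (1+\alpha)\log k)\right) = n^{-k} \cdot k^{-(1+\alpha)}$. Hence for a fixed $i \in S^*$, the failure probability is at most $n^{-k} k^{-(1+\alpha)} \le k^{-(1+\alpha)}/k = k^{-(1+\alpha)}$ (crudely bounding $n^{-k} \le 1/k$ for $n \ge k \ge 2$; even $n^{-k} \le 1$ suffices up to constants, but the slack here is enormous). Finally, union-bounding over all $k$ choices of $i \in S^*$, the probability that some defective item has no singleton node is at most $k \cdot k^{-(1+\alpha)} = k^{-\alpha}$, so every $i \in S^*$ has a singleton node with probability at least $1 - k^{-\alpha}$, as claimed.

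There is really no hard obstacle here — the argument is a direct second-moment-free computation, since the relevant events (a given right node being a singleton for $i$) are mutually independent across right nodes, so no correlation bookkeeping is needed. The only place requiring a little care is bookkeeping the constants: making sure the factor $e$ in $M$ cancels the $1/(ek)$ in the per-node singleton probability, and that the $k^2\log n$ term is what drives the $n^{-k}$ decay while the $k(1+\alpha)\log k$ term supplies the extra $k^{-(1+\alpha)}$ factor that survives the union bound over the $k$ defectives. I would state the single-item bound as an intermediate claim, verify the exponent arithmetic once, and then the union bound is immediate.
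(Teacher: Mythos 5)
Your proof is correct for the lemma as literally stated, and the core computation is the same as the paper's: a fixed right node is a singleton for a fixed defective with probability $p(1-p)^{k-1}\ge \frac{1}{ek}$, independence across the $M$ right nodes gives per-item failure probability at most $\exp(-M/(ek)) = n^{-k}k^{-(1+\alpha)}$, and a union bound over the $k$ items of $S^*$ finishes. (A harmless slip: $n^{-k}k^{-(1+\alpha)}\le k^{-(1+\alpha)}/k = k^{-(2+\alpha)}$, not $k^{-(1+\alpha)}$, but this only strengthens your bound.) The structural difference is the scope of the union bound: the paper's proof additionally unions over all $\binom{n}{k}\le n^k$ possible defective sets, bounding the total failure probability by $\binom{n}{k}\,k\,(1-p/e)^{M}\le n^{k}\cdot k\cdot n^{-k}k^{-(1+\alpha)} = k^{-\alpha}$. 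So the ``enormous slack'' $n^{-k}$ you noticed is not slack at all --- it is precisely the budget that the $k^{2}\log n$ term in $M$ was included to pay for. What that buys is the uniform statement that, with probability $1-k^{-\alpha}$ over the random graph, \emph{every} $k$-subset simultaneously has a singleton node for each of its elements; this is the version needed downstream (Theorem~\ref{thm:singleton}), where a single realized testing matrix must succeed against an arbitrary defective set, whereas your argument gives only a per-fixed-set guarantee. To match the paper's use of the lemma, add the extra union bound over the choices of $S^*$; your computation already has exactly the room required for it.
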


\cite{SAFFRON} show that each singleton node can be identified efficiently for an appropriate choice of signature matrix $U$, and the defective item can be decoded from the output of the tests $y_i$'s. \cite{guruswami2023noise} later showed that the number of rows in the testing matrix can be optimized further using Reed-Solomon codes to construct the signature matrix which also makes the testing scheme robust to random bit-flip noise. 
In this work, we show that a careful selection of the signature matrix can also make the above testing procedure robust to a small set of adversarial deletions. 

Let $C \subseteq \mathbb{F}_2^N$ be a linear insertion-deletion code of rate $K$ that can be efficiently decoded from a constant fraction of deletions. Existence and efficient construction of such binary linear codes was established by \cite{linear_const_insdel_code:LIPIcs.ICALP.2023.41}. 

\begin{theorem}[Thm 10 of \cite{linear_const_insdel_code:LIPIcs.ICALP.2023.41}]\label{thm:good_del_code}
For every constant $\gamma \in (0,1/3)$, there exists a binary linear insertion-deletion code with rate $\geq \frac13 -\gamma$, relative decoding radius $\Omega\left(  \frac{\gamma}{\log{\frac{1}{\gamma}}}\right)$, where both encoding and decoding are in polynomial time.
\end{theorem}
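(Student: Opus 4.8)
The plan is to construct the code by \emph{code concatenation}, which is the natural way to obtain a polynomial-time, constant-rate \emph{binary} insdel code while preserving linearity: if the outer code is $\mathbb{F}_{2^b}$-linear and the inner code is an $\mathbb{F}_2$-linear map $\mathbb{F}_2^b \to \mathbb{F}_2^{3b}$, then viewing $\mathbb{F}_{2^b}$ as an $\mathbb{F}_2$-vector space, the concatenated code is $\mathbb{F}_2$-linear. The rate $1/3$ will come entirely from the inner map $\mathbb{F}_2^b \to \mathbb{F}_2^{3b}$, while the $\gamma$ loss will come from spending a $\gamma$-fraction of the outer code on redundancy. First I would fix the inner block length $b = \Theta(\log(1/\gamma))$; this is the smallest alphabet exponent that still permits a high-rate, efficiently decodable outer linear insdel code, and it is exactly the source of the $\log(1/\gamma)$ factor in the final decoding radius.

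For the inner code I would invoke the probabilistic method for \emph{linear} insdel codes: a uniformly random $\mathbb{F}_2$-linear map of rate $1/3$ (well below the half-Singleton ceiling of $1/2$ for linear insdel codes) has relative insdel distance $\Omega(1)$ with positive probability, via a union bound over nonzero codewords and over all pairs of subsequences that could collide. Since $b = \Theta(\log(1/\gamma))$ is constant in $n$, an \emph{exhaustive search} over all generator matrices in $\mathbb{F}_2^{b \times 3b}$ runs in time $2^{O(b^2)} = \mathrm{poly}(1/\gamma)$, which is constant, so a good inner code is found deterministically in polynomial time; inner decoding is likewise a constant-size brute-force search over short windows.

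For the outer code I would take a known efficient $\mathbb{F}_{2^b}$-linear insdel code of rate $1 - O(\gamma)$ and relative symbol-level insdel distance $\Omega(\gamma)$ — such large-alphabet linear insdel codes are available from algebraic constructions combined with synchronization indexing — so that the overall rate is $(1 - O(\gamma)) \cdot \tfrac13 = \tfrac13 - O(\gamma)$, as required. The decoder runs the window-based inner brute-force decoder to produce a sequence of candidate outer symbols, then runs the efficient outer insdel decoder. Correctness follows from a \emph{charging argument}: a total $\delta$-fraction of insdels can corrupt at most $O(\delta/\delta_{in})$ of the inner blocks, and once the $\Theta(b)$-length boundary ambiguities are accounted for, this translates to an $O(\delta \cdot b / \delta_{in})$ fraction of outer-symbol errors, which the outer code corrects provided $\delta = O(\gamma \cdot \delta_{in}/b) = \Omega(\gamma/\log(1/\gamma))$.

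The main obstacle is reconciling \emph{linearity} with \emph{synchronization}. The standard recipe appends a fixed synchronization string to index positions, but a fixed nonzero offset destroys linearity and yields only an \emph{affine} code. The resolution I would pursue is to place \emph{all} synchronization burden on the outer large-alphabet linear insdel code — whose own symbol-level insdel distance already lets the decoder realign after deletions shift block boundaries — while the inner code contributes only rate and local insdel distance and never carries a fixed marker. The delicate part of the proof is then showing that every insdel crossing a block boundary can be charged to a bounded number of outer-symbol errors without disturbing the $\mathbb{F}_2$-linear structure; making this charging argument tight is what produces the $\log(1/\gamma)$ loss and is the step I expect to require the most care.
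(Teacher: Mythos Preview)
The paper does not prove this statement at all: it is quoted verbatim as Theorem~10 of \cite{linear_const_insdel_code:LIPIcs.ICALP.2023.41} and used as a black box in Section~\ref{sec:disjunct-sublinear} to supply the signature matrix $U$. There is therefore no ``paper's own proof'' against which to compare your proposal.

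As to the proposal itself: your concatenation outline is a reasonable high-level strategy for this kind of result, and the observation that the $\log(1/\gamma)$ loss should come from setting the inner block length $b=\Theta(\log(1/\gamma))$ is on the right track. The step you flag as delicate --- obtaining an $\mathbb{F}_{2^b}$-\emph{linear} outer insdel code of rate $1-O(\gamma)$ and relative insdel distance $\Omega(\gamma)$ without destroying linearity via a fixed synchronization offset --- is indeed the crux, and your sketch does not actually discharge it: you appeal to ``algebraic constructions combined with synchronization indexing'' while simultaneously noting that synchronization markers break linearity. Resolving this tension is the substance of the cited work, not something that can be waved through; if you want a self-contained proof you would need to either exhibit such a linear outer code explicitly or give a different mechanism for boundary resynchronization that is compatible with $\mathbb{F}_2$-linearity.
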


\textbf{Setting Parameters: }
We let $C$ to be the instantiataion of such codes with $\gamma = 1/8$ (arbitrary constant $<1/3)$, $K = \log n$, and $N = \Theta(\log n)$. Consider the codebook obtained by enumerating all the codewords of $C$ as columns of the matrix $U' \in \{0,1\}^{N \times n}$.
The signature matrix $U$ is then constructed by stacking $U'$ with its complement $\overline{U'}$. 
$U$ therefore, has $h= 2N = \Theta(\log n)$ rows. The testing matrix obtained by the construction described above using a bipartite graph with $M$ left vertices will have $m = Mh = O( k^2\log^2 n)$ rows. We now provide a decoding algorithm that recovers the set of $k$ defective items indexed by $S^* \subseteq [n]$ even after $\Delta = O(\log n)$ deletions.

\begin{algorithm}[h!]
\begin{flushleft}
\textbf{Input}: $\mathbf{A} \in \{0,1\}^{m \times n}$ --  Constructed as described above \\
$\ti{\mathbf{y}} = \corr(\test{A}{x^*}, \Delta) \in \{0,1\}^{(m-\Delta)}$\\
\end{flushleft}
\begin{algorithmic}[1]
\caption{Decode with Singletons}\label{alg:singleton-dec}
\State {Let $z \in \{0,1\}^m$ be the extension of $\tilde{y}$ obtained by padding $\Delta$ $0$'s at the end}
\State{Let $\hat{S} = \emptyset$}
\State{For each $i \in [M]$, let $z_i = z[(i-1)h +1: ih]$}
\ForEach{$i \in [M]$} 
    \If{$\wt(z_j) \le h/2 + \Delta$}
        \State {$\hat{S} \leftarrow \text{Decode}(z_j[:h/2], C$)}
    \EndIf 
\EndFor
\State {Return $\hat{S}$}
\end{algorithmic}
\end{algorithm} 

The correctness of Algorithm~\ref{alg:singleton-dec} follows from the following three lemmas: 

\begin{lemma}\label{lem:singleton-lem1}
    If the node $j \in R$ is a singleton node, then $\wt(z_j) \le h/2 + \Delta$. 
\end{lemma}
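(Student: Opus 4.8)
The plan is to first determine the noiseless structure of a singleton block, and then track how at most $\Delta$ deletions can perturb its Hamming weight.

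First I would pin down the noiseless case. If $j \in R$ is a singleton node, it is adjacent to exactly one defective item, say $i^* \in \supp{x^*}$, so by the description of the test outputs the noiseless $j$-th block is $y_j = \bigvee_{i \in \supp{x^*},\,(i,j)\in E} U_i = U_{i^*}$. Now recall that the signature matrix $U$ is built by stacking $U'$ on top of its bitwise complement $\overline{U'}$, so $U_{i^*}$ is the length-$h$ vector consisting of the codeword $U'_{i^*}\in\{0,1\}^N$ followed by $\overline{U'_{i^*}}$. Hence $\wt(U_{i^*}) = \wt(U'_{i^*}) + \bigl(N - \wt(U'_{i^*})\bigr) = N = h/2$. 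Thus, in the absence of deletions, every singleton block has weight exactly $h/2$; this weight-balancing is the one genuinely load-bearing point, since it is what makes the threshold $h/2+\Delta$ in Algorithm~\ref{alg:singleton-dec} meaningful.

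Next I would make precise the effect of deletions at the block level. Write $y = \test{A}{x^*}\in\{0,1\}^m$, let the $\Delta$ deleted coordinates be $d_1 < \dots < d_\Delta$, so that $\ti y$ is $y$ with these coordinates removed and $z$ is $\ti y$ with $\Delta$ trailing zeros appended. A standard fact about deletions gives, for each $1 \le t \le m-\Delta$, a shift $\delta(t)\in\{0,1,\dots,\Delta\}$, non-decreasing in $t$, with $z[t] = y[t+\delta(t)]$ and the map $t \mapsto t+\delta(t)$ strictly increasing; for $t > m-\Delta$ we have $z[t]=0$. Localizing to the $j$-th block $z_j = z[(j-1)h+1:jh]$: every index $t$ in this block with $t\le m-\Delta$ satisfies $(j-1)h+1 \le t+\delta(t) \le jh+\Delta$, these images are distinct, and the remaining coordinates of $z_j$ are zero. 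Therefore $\wt(z_j) \le \wt\bigl(y[(j-1)h+1:\min(jh+\Delta,\,m)]\bigr) \le \wt(y_j) + \Delta$, because that window is $y_j$ together with at most the first $\Delta$ coordinates of the next block. Combining with $\wt(y_j) = h/2$ from the first step yields $\wt(z_j) \le h/2 + \Delta$.

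There is no substantial obstacle beyond careful bookkeeping: one must be precise about the deletion shift $\delta(\cdot)$, the block boundaries, and the edge case $j = M$ (harmless, since the overflowing coordinates of $z_M$ are exactly the padded zeros). The proof is otherwise a direct consequence of the complement-stacking construction of $U$.
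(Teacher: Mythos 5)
Your proof is correct and follows essentially the same route as the paper's: the complement-stacking of $U'$ forces every noiseless singleton block to have weight exactly $N = h/2$, and at most $\Delta$ deletions can raise a block's weight by at most $\Delta$. Your explicit bookkeeping with the shift map $t \mapsto t+\delta(t)$ just makes rigorous the "increase in weight is at most $\Delta$" step that the paper states informally.
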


\begin{lemma}\label{lem:singleton-lem2}
    If the node $j \in R$ is not a singleton node, then $\wt(z_j) \ge h/2 + 2\Delta$. 
\end{lemma}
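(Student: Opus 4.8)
The plan is to exploit the stack-with-complement structure $U=\binom{U'}{\overline{U'}}$ of the signature matrix, which makes every column $U_i$ have weight exactly $N=h/2$ and, more importantly, makes the bitwise $\vee$ of any two \emph{distinct} columns have weight bounded away from $h/2$ by the minimum Hamming distance of the insertion–deletion code $C$. First I would reduce to the interesting case: a non-singleton node $j\in R$ is connected either to no defective or to at least two defectives; in the former case $y_j=0^h$ and the block $z_j$ can only pick up at most $\Delta$ ones from neighbouring blocks through the positional shift discussed below (this case is disposed of separately, together with Lemma~\ref{lem:singleton-lem1}), so assume $j$ is connected to distinct defectives $i_1,i_2$. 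Then $y_j=\bigvee_{i\in\supp{x^*},\,(i,j)\in E}U_i\ \ge\ U_{i_1}\vee U_{i_2}$, and it is this $\vee$ that carries the weight.

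Next I would compute $\wt(U_{i_1}\vee U_{i_2})$ exactly. With $A=\supp{U'_{i_1}}$ and $B=\supp{U'_{i_2}}$, De Morgan gives $U_{i_1}\vee U_{i_2}=\binom{U'_{i_1}\vee U'_{i_2}}{\overline{U'_{i_1}\wedge U'_{i_2}}}$, so
\[
\wt(U_{i_1}\vee U_{i_2})=|A\cup B|+\bigl(N-|A\cap B|\bigr)=N+|A\triangle B|=N+d_H(U'_{i_1},U'_{i_2})\ \ge\ N+D,
\]
where $D$ denotes the minimum Hamming distance of $C$ and $d_H$ the Hamming distance. Hence $\wt(y_j)\ge h/2+D$. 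I then need $D$ to dominate $\Delta$: since $C$ is the code of Theorem~\ref{thm:good_del_code} instantiated with the fixed constant $\gamma=1/8$, it decodes uniquely from some constant fraction $\rho N$ of deletions, hence no two distinct codewords share a common subsequence of length $N-\rho N$; since two codewords at Hamming distance $t$ agree on $N-t$ coordinates and therefore do share a common subsequence of that length, this forces $D>\rho N=\Omega(N)=\Omega(h)$. As $\Delta=O(\log n)$ while $h=\Theta(\log n)$, the hidden constants can be fixed so that $D\ge 4\Delta$, giving $\wt(y_j)\ge h/2+4\Delta$.

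Finally I would transfer this bound from $y_j$ to the block $z_j$ that the algorithm actually processes. Recall $z$ is $\ti{y}$ with $\Delta$ zeros appended, and $\ti{y}$ is $y$ with at most $\Delta$ deletions; tracking the effect of these on the block boundaries, the coordinates of $z$ occupied by $z_j$ are the images of a contiguous window $y[p_j:q_j]$ of $y$ with $p_j\le (j-1)h+1+\Delta$ and $q_j\ge jh$ (with the evident modification for the last block $j=M$, where the window runs to the end of $y$ and is padded by zeros), and $z_j$ is obtained from that window by at most $\Delta$ deletions. Consequently the window already contains all but at most $\Delta$ of the entries of the original block $y_j$, and the deletions producing $z_j$ erase at most $\Delta$ further $1$'s, so $\wt(z_j)\ \ge\ \wt(y_j)-2\Delta\ \ge\ h/2+2\Delta$, which is the claim.

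I expect the main obstacle to be exactly this last transfer step: one must argue carefully and uniformly in $j$ — including the boundary block and its interaction with the trailing zero padding — that the at most $\Delta$ deletions displace the block $z_j$ from the original block $y_j$ by at most $\Delta$ positions, so that $z_j$ provably retains all but $\le\Delta$ of the symbols of $y_j$. By contrast the algebraic heart of the lemma — the identity $\wt(U_{i_1}\vee U_{i_2})=N+d_H(U'_{i_1},U'_{i_2})$ together with the bound $D=\Omega(h)$ coming from the decoding radius of $C$ — is short.
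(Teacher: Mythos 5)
Your argument is essentially the paper's: both lower-bound the uncorrupted block weight via the stacked-complement structure, using $\wt(U_{i_1}\vee U_{i_2}) = N + d_H(c_{i_1},c_{i_2}) \ge h/2 + D$ for two distinct defectives at the node, and both then observe that the at most $\Delta$ deletions (with the resulting block misalignment) can lower the block's weight only by $O(\Delta)$, so the claim follows once $D$ is a sufficiently large multiple of $\Delta$ (the paper's transfer loses only $\Delta$ and assumes $d>3\Delta$; yours loses $2\Delta$ and asks $D\ge 4\Delta$ — both are fine since $D=\Omega(\log n)$ while $\Delta=O(\log n)$). Your explicit derivation of $D>\rho N$ from the deletion-decoding radius of $C$, and your aside about non-singleton nodes connected to no defectives, are minor additions to points the paper leaves implicit.
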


\begin{lemma}\label{lem:singleton-lem3}
    If $j \in R$ is a singleton node, then $d_{del}(z_j[:h/2], C) \le \Delta$. 
\end{lemma}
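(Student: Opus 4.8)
The plan is to reduce the claim to a statement about longest common subsequences and then exploit the identity $d_{del}(u,v) = N - \text{lcs}(u,v) - 1$ for length-$N$ strings. Fix a singleton node $j \in R$ and let $i^* \in \supp{x^*}$ be the unique defective connected to it. By the construction of $Z^{(j)}$ and of the signature matrix $U$, the noiseless block of outcomes at node $j$ is $y_j = U_{i^*} = (c,\overline{c})$, where $c := U'_{i^*} \in C$ is the codeword assigned to item $i^*$ and $h = 2N$; in particular the first $h/2 = N$ coordinates of $y_j$ are exactly $c \in C$. So it will suffice to produce a common subsequence of $z_j[1:h/2]$ and $c$ of length at least $N-\Delta$: this gives $d_{del}(z_j[1:h/2],c) \le \Delta - 1 < \Delta$, hence $d_{del}(z_j[1:h/2], C) \le \Delta$.

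The first step is to pin down where the surviving content of block $j$ ends up after the deletions and the zero-padding. Writing the noiseless output as $y = (y_1,\dots,y_M)$ with $y_i \in \{0,1\}^h$, and letting $T^* \subseteq [m]$ with $|T^*| \le \Delta$ be the true set of deleted positions producing $\ti{y} = \corr(y,\Delta)$, I would let $a$ be the number of deletions of $T^*$ strictly before the coordinate range $[(j-1)h+1,\,jh]$ of block $j$, and $b$ the number inside that range, so $a+b \le \Delta$. Then the surviving symbols of $y_j$ form a single contiguous run $w$ in $\ti{y}$ — namely $y_j$ with $b$ positions deleted — occupying positions $[(j-1)h-a+1,\, jh-a-b]$, and these positions all precede the appended zeros, so $z$ agrees with $\ti{y}$ there. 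Consequently the re-blocked chunk $z_j = z[(j-1)h+1:jh]$ begins with $w' := w[a+1:h-b]$, a string of length $h-a-b \ge h-\Delta$, while its remaining $a+b$ symbols come from block $j+1$ or from the padding and are irrelevant. Since $\Delta = O(\log n)$ lies within the constant-relative decoding radius of $C$ (Theorem~\ref{thm:good_del_code}) and $h = 2N$, we have $\Delta \le h/2$, so $|w'| \ge N$ and therefore $z_j[1:h/2] = w'[1:N]$.

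The second step is the key observation that $w'$ is simply $y_j = (c,\overline{c})$ with a set $D$ of $a+b \le \Delta$ coordinates deleted — the $b$ internal deletions together with the $a$ earliest surviving coordinates. Let $e := |D \cap [1,N]| \le \Delta$ be the number of these that fall inside the copy of $c$. The symbols of $w'$ inherited from $y_j[1:N]$ appear as a prefix of $w'$ and spell out $c$ with $e$ coordinates removed, i.e.\ a subsequence of $c$ of length $N-e$; since $N-e \le N$, this prefix is also a prefix of $w'[1:N] = z_j[1:h/2]$. Hence it is a common subsequence of $z_j[1:h/2]$ and $c$, giving $\text{lcs}(z_j[1:h/2], c) \ge N-e \ge N-\Delta$, which is exactly what the first paragraph needs.

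I expect the only real obstacle to be the first step: the careful accounting of how deletions in earlier blocks ($a$) and within block $j$ ($b$) interact with the re-blocking of $z$, and in particular the verification that $z_j[1:h/2]$ stays entirely inside the surviving run $w'$ of the true block $j$ rather than bleeding into the following block or the appended zeros. That is precisely where the regime hypothesis $\Delta \le h/2$ (equivalently, $\Delta$ at most the decoding radius of $C$) is used. Once this window alignment is fixed, the longest-common-subsequence bound and the passage from $d_{del}(\cdot,c)$ to $d_{del}(\cdot,C)$ are immediate; I would also reuse the same decomposition $z_j = w' \,\|\, (\text{junk})$ with $\wt(w') \le \wt(y_j)$ to handle the weight estimates in the companion lemmas.
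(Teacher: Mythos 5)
Your proof is correct and follows essentially the same route as the paper's: since the at most $\Delta$ deletions can shift the contents of block $j$ by at most $\Delta$ positions, a long prefix of the re-blocked chunk is a subsequence of the singleton's codeword $c$ missing at most $\Delta$ symbols, giving $\text{lcs}(z_j[:h/2],c)\ge N-\Delta$ and hence deletion distance at most $\Delta$. Your window-alignment bookkeeping is simply a more careful version of the paper's one-line argument (which works with $z_j[:h/2-\Delta]$ instead of $z_j[:h/2]$), so no new idea is involved.
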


\begin{proof}[Proof of Theorem~\ref{thm:singleton}]
The proof of Theorem~\ref{thm:singleton} follows from the correctness of Algorithm~\ref{alg:singleton-dec}.

First, note that Algorithm~\ref{alg:singleton-dec} runs in time $O(MT)$, where $M = O(\log n)$, and $T$ is time to decode $C$ which we know runs in  $\textsc{poly(N)} = \textsc{poly}(\log n)$ time. Hence Algorithm~\ref{alg:singleton-dec} is super-efficient.

From Lemma~\ref{lem:singleton-lem1}, Lemma~\ref{lem:singleton-lem2}, it follows that the \textsc{If} condition in Line 5 is satisfied only by singleton nodes. Furthermore, by Lemma~\ref{lem:singleton-lem3}, it follows that the number of errors in the corresponding word is bounded above by $2\Delta < D$. Therefore, the codeword (or signature) corresponding to the defective item connected to the singleton node can be decoded and identified correctly. 
\end{proof}

To complete the proof of Theorem~\ref{thm:singleton}, we now present the proof of the three lemmas stated above. 

\begin{proof}[Proof of Lemma~\ref{lem:singleton-lem1}]
    Firstly, note that for any $i \in [M]$,
    \begin{align}\label{eq:lem1}
        \wt(z_i) & 
        \le \wt(\tilde{y}_i) + \Delta, 
    \end{align}
    where the inequality follows from the fact that the total number of deletions are at most $\Delta$, and therefore, the increase in weight can be at most $\Delta$.

    Finally, note that since $i$ corresponds to a singleton node, the true output 
    $y_i = [c^T, \overline{c}^T]^T$ for some codeword $c \in C$. 
    Therefore, the true output has weight exactly $N$, and after deletions, $\wt(\tilde{y}_i) \le N$. The proof follows by combining this observation with Equation~\ref{eq:lem1}. 
\end{proof}

\begin{proof}[Proof of Lemma~\ref{lem:singleton-lem2}]
    The proof follows from the fact that for any non-singleton node, the true output will be the \OR-sum of $\ell \ge 2$ distinct codewords (and their complements). In particular, 
    \[
    y_i = \bigvee_{j \le [\ell]} \begin{bmatrix}
        c_j\\ \overline{c_j}
    \end{bmatrix}
    \]
    From the distance property of the code $C$, we can conclude that the true output will have $\wt(y_i)\ge N + 2d$ since,
    \begin{align*}
        \wt(y_i) &= \wt(\bigvee_{j \le [\ell]} c_j) + \wt( \bigvee_{j \le [\ell]} \overline{c_j}) \\
        &\ge \wt(c_1 \vee c_2) + \wt(\overline{c_1} \vee \overline{c_2}) 
        = 2 \sum_{i \in [n]} \mathbbm{1}_{c_1[i] \neq c_2[i]} + \sum_{i \in [n]} \mathbbm{1}_{c_1[i] = c_2[i]} \\
        &\ge 2d + (N-d) = N+d
    \end{align*}
    The penultimate inequality follows from the observation that for all the indices $i \in [N]$ such that $c_1(i) \neq c_2(i)$, will contribute $1$ to both $\wt(c_1(i) \vee c_2(i))$ and $\wt(\overline{c_1}(i) \vee \overline{c_2}(i))$. Furthermore, the indices where $c_1(i) = c_2(i)$ will contribute to either $\wt(c_1(i) \vee c_2(i))$ or to $\wt(\overline{c_1}(i) \vee \overline{c_2}(i))$. 

    Therefore, after deletions, the $\wt(z_i)$ can do down by at most $\Delta$. Hence for $d > 3\Delta$,  we have that $\wt(z_i) \ge N + d - \Delta > N + \Delta$.
    
    Finally, note that since $i$ corresponds to a singleton node, the true output 
    $y_i = [c^T, \overline{c}^T]^T$ for some codeword $c \in C$. 
    Therefore, the true output has weight exactly $N$, and after deletions, $\wt(\tilde{y}_i) \le N$. The proof follows by combining this observation with Equation~\ref{eq:lem1}. 
\end{proof}

\begin{proof}[Proof of Lemma~\ref{lem:singleton-lem3}]

Let $i \in R$ be a singleton node, and let $y_i = [c^T, \overline{c}^T]^T$ be the uncorrupted output for some codeword $c \in C$. Let $\tilde{c} := z_i[:h/2 - \Delta]$ denote the part of the corrupted block that is a definite substring of $c$ (since the total number of deletions are at most $\Delta$). Therefore,  $LCS(c, \tilde{c}) \ge N - \Delta$, and $d_{del}(c, \tilde{c}) < \Delta \le d/3$. 

\end{proof}

\section{Deterministic construction of $(k,\Delta)$-deletion disjunct matrix}\label{sec:disjunct-deterministic}

In \cite{kautz1964nonrandom}, Kautz and Singleton (KS) provided a deterministic construction of $k$-disjunct matrices with $O(k^2 \log^2 n)$ rows using MDS codes, such as the Reed-Solomon codes, of appropriate parameters. We now show that using certain strong variants of MDS codes, a simple modification of the KS construction will give us $(k, \Delta)$-deletion disjunct matrices. 

Before we describe our construction, we will briefly review the KS construction of $k$-disjunct matrices. Fix any arbitrary ordering of the elements of $\mathbb{F}_q$. 
Given a $[N,K,D]_q$ - code $C \subseteq \mathbb{F}_q^N$, the KS construction first maps each symbol $i \in \mathbb{F}_q$ of the codeword to a binary vector $e_i \in \{0,1\}^q$, where $e_i$ is the standard unit vector with $1$ in the $i$-th position and $0$ everywhere else. This \emph{identity} map allows us to map each codeword $c \in C$ to a binary vector of length $qN$. \cite{kautz1964nonrandom} then showed that the $qN \times |C|$ binary matrix obtained by taking the binary representation of each codeword in $C$ is $k$-disjunct for any $k < N/(N-D)$. 

In this section, we show that if the code $C$ also possesses $(\Delta, d)-\ell_\infty$ distance property, then a slight modification of the above-described KS construction will give us a $(k, \Delta)$-deletion disjunct matrix. 



%

\begin{proof}[Proof of Theorem~\ref{thm:kddel}]
    For the construction of the $(k, \Delta)$-deletion disjunct matrix, we will employ the same identity map as in \cite{kautz1964nonrandom} but pad $\Delta$ zeros between any two symbols. 

    For each codeword symbol $\alpha_i \in \mathbb{F}_q$, construct a binary vector $v(\alpha_i) \in \{0,1\}^{q+\Delta}$ as 
    \begin{equation*}
            v(\alpha_i)[j] = \begin{cases}
                1 &  j = i + \Delta \\
                0 & \text{otherwise}.
            \end{cases}
        \end{equation*}

    Now, each codeword $c \in C$ can be mapped to a binary vector in $\{0,1\}^{N(q+\Delta)}$ by mapping each symbol to $\{0,1\}^{q+\Delta}$ using $v$ as $v(c) = (v(c_i))_{i \in [N]}$. Each vector $v(c)$ is, therefore, made of $N$ blocks each of length $q+\Delta$. Note that each block has exactly $1$ one, and the rest are zeros. 

    Let $M_C$ be the $N(q+\Delta) \times |C|$ binary matrix where each column corresponds to the binary vector obtained by mapping each codeword $c \in C$ using the above-defined \emph{padded-identity map}, i.e, 
    $M_C = [v(c_i)]_{i\in [N], c\in C}$. We now show that $M_C$ is $(k, \Delta)$-deletion disjunct if $C$ has $(\Delta, d)-\ell_\infty$ distance property.

    First, consider any two columns $x, y \in M_C$ corresponding to codewords $c_1, c_2 \in C$. Let $x', y' \in \{0,1\}^{Nq + (N-1)\Delta}$ be the binary vectors obtained after deleting some distinct set of $\Delta$ entries each from $x$ and $y$ respectively. 

    Partition the vectors $x', y'$ into $N$ blocks, where the first $N-1$ blocks are each of length $q+\Delta$, and the last block is of length $q$. Firstly, note that 
    the non-zero entries from a block in $x$ (or $y$) either gets deleted or is shifted upward by at most $\Delta$ places. Therefore, due to the padding of $\Delta$ zeros at the beginning of each block, no $1$ from a block in $x$ (or $y$) moves to another block. 
    Secondly, since $x$ has at least $D$ non-zero blocks, after $\Delta$ deletions, $x'$ will have at least $D-\Delta$ non-zero blocks. 
    
    Consider the number of blocks in $x'$ and $y'$ that have a $1-0$ match, i.e, $x'_i = 1$, and $y'_i = 0$. Since each non-zero block of $x'$ has only $1$ non-zero entry, we just have to count the number of non-zero blocks of $x'$ that are not identical to $y'$ after deletions. Using the $(\Delta, d)-\ell_\infty$ distance property of $C$, we know that there are at least $d$ entries where the $1$'s in $x$ and $y$ are separated by at least $\Delta$ positions. Therefore, even after deleting $\Delta$-zeros, the blocks corresponding to these $d$ entries (if non-zero) remain distinct. 
    Since at least $d-\Delta$ of these blocks in $x'$ are non-zero, it follows that there are at least $d-\Delta$ blocks with $1-0$ matches between $x'$ and $y'$. In other words, there are at most $N-(d-\Delta)$ blocks with no $1-0$ matches, 

    We now extend this observation to a larger set of $k$ columns. Let $x, y_1, \ldots, y_k$ be a set of $k+1$ columns of $M_C$. Using the same argument as above, we know that there are at most $N-(d-\Delta)$ blocks with no $1-0$ matches between each pair of $x'$ and $y'_i$, $i \in [k]$. If these blocks are all non-overlapping, we can conclude that there are at most $k(N-(d-\Delta))$ blocks with no $1-0$ matches between $x'$ and $\bigvee_i y_i'$. 

    Therefore, if $k(N-(d-\Delta)) < N$, then there is at least $1$ block with a $1-0$ match between $x'$ and $\bigvee_i y_i'$. 
\end{proof}

We further present a simple construction of a $(N,K,D)_{q'}$ code with $(\Delta, d)-\ell_\infty$ property from a $(N, K, D)_q$ code. 


\begin{proof}[Proof of Theorem~\ref{thm:code}]
    The construction follows by scaling of each coordinate of the codewords in $C$ by a factor of $\Delta$. This ensures that wherever the two codewords in $C$ differ by even $1$, they now differ by least $\Delta$. 

    Fix an arbitrary ordering of elements of $\F_q$ as $\alpha_1 < \alpha_2 < \ldots < \alpha_q$, and similarly consider any ordering of elements of $\F_{q'}$ as $\beta_1 < \beta_2 < \ldots < \beta_{q'}$. 

   Let $f:\mathbb{F}_q \rightarrow \mathbb{F}_{q'}$ be defined as follows: 
   \[
   f(\alpha_i) =  \beta_{\Delta \cdot i}
   \]

    The code $C'$ is then obtained from $C$ by applying the function $f$ coordinate-wise to codeword, \[
    C' = \{ (f(c[1]), f(c[2]), \ldots, f(c[N])) ~|~ c \in C \}
    \]

    Since $f$ preserves Hamming distance, $C'$ is also an $(N, K, D)$ code. 
    Also, for any two codewords, $c, c' \in C'$, we have that $|\psi(c[i]) - \psi(c'[i])| > \Delta$, for all $i \in \supp{c-c'}$. Recall that $\psi: \F_{q'} \rightarrow \mathbb{Z}$ defined as $\psi(\beta_i) = i$ according to the ordering fixed above. 
\end{proof}

Using this construction from Theorem~\ref{thm:code} with any MDS codes such as Reed Solomon codes with parameters $[N=\tilde{O}(k), K=\tilde{O}(\log n), D=N-K+1]$, along with Theorem~\ref{thm:kddel}, we get a $(k, \Delta)$-deletion disjunct matrix with $m = O(k^2 \Delta \log{n})$ rows that can tolerate roughly $O(\sqrt{\log n})$ deletions.

Note that the code construction in Theorem~\ref{thm:code} may not be optimal, and hence we get the same bounds on the number of tests as a trivial construction presented in Theorem~\ref{thm:del-dist}. We leave the intriguing problem of constructing better codes with $(\Delta, d)-\ell_\infty$ property as an open problem.

\section{Conclusion and Open Problems}
In this work, we initiate the study of non-adaptive combinatorial group testing in the presence of deletions. We provide both necessary and sufficient conditions for the exact recovery of the underlying set of defectives. Furthermore, we also provide a randomized and a potential deterministic construction of testing matrices that satisfy the necessary conditions and give an efficient decoding algorithm. 


Furthermore, obtaining explicit constructions of non-trivial deletion disjunct matrices with fewer tests is left for future work. One suggested approach is to use a Kautz-Singleton-like construction using codes with $(\Delta, d)-\ell_\infty$ distance properties, however, constructing such codes with good parameters is left as an open problem which can be of independent interest.  



\clearpage
\printbibliography

\clearpage
\appendix

\end{document}